\newcommand{\dan}[1]{\todo[color=blue!20]{#1}}
\newcommand{\tm}{\tau_{\max}}
\newcommand{\tg}{\widetilde{g}}
\newcommand{\fadd}{\texttt{fetch\&add}}
\newcommand{\norm}[1]{\left\|#1\right\|}
\newcommand{\lit}[1]{$\mathsf{#1}$}
\renewcommand{\paragraph}[1]{\vspace{0.1cm} \noindent\textbf{#1}\hspace{0.15em}}
\newtheorem*{rep@theorem}{\rep@title} \newcommand{\newreptheorem}[2]{%
\newenvironment{rep#1}[1]{%
\def\rep@title{\bf #2 \ref{##1} }%
\begin{rep@theorem} }%
{\end{rep@theorem} } }
\title{The Convergence of Stochastic Gradient Descent \\ in Asynchronous Shared Memory}
\author{ Dan Alistarh \\ IST Austria \and Christopher De Sa \\ Cornell \and Nikola Konstantinov \\ IST Austria}
\date{}
\begin{document}

\maketitle

\begin{abstract}
Stochastic Gradient Descent (SGD) is a fundamental algorithm in machine learning, representing the optimization backbone for training several classic models, from regression to neural networks. Given the recent practical focus on distributed machine learning, significant work has been dedicated to the convergence properties of this algorithm under the inconsistent and noisy updates arising from execution in a distributed environment. However, surprisingly, the convergence properties of this classic algorithm in the standard shared-memory model are still not well-understood. 

In this work, we address this gap, and provide new convergence bounds for lock-free concurrent stochastic gradient descent, executing in the classic asynchronous shared memory model, against a strong adaptive adversary. Our results give improved upper and lower bounds on the ``price of asynchrony'' when executing the fundamental SGD algorithm in a concurrent setting. 
They show that this classic optimization tool can converge faster and with a wider range of parameters than previously known under asynchronous iterations. 
At the same time, we exhibit a fundamental trade-off between the maximum delay in the system and the rate at which SGD can converge, which governs the set of parameters under which this algorithm can still work efficiently. 
\end{abstract}





\section{Introduction}

\paragraph{Background.} The tremendous recent progress in machine learning can be explained in part through the availability of extremely vast amounts of data, but also through improved computational support for executing machine learning tasks at scale.
Perhaps surprisingly, some of the main algorithms driving this progress are have been known in some form or another for a very long time. One such tool is the classic stochastic gradient descent (SGD) optimization algorithm~\cite{RM51}, introduced by Robbins and Munro in the 1950s,
variants of which are currently the tool of choice for optimization in a variety of settings, such as image classification and speech recognition via neural networks, but also in fundamental data processing tools such as regression. 

In a nutshell, SGD works as follows. Given a $d$-dimensional function 
$f : \R^d \rightarrow \R$ which we want to minimize, 
 we have access to stochastic gradients $\tg(x)$ of this function at the point $x$, that is, random vectors with the property that  $\E [\tg(x) ] = \nabla f(x)$. 
SGD then consists in the iteration
$$x_{t + 1} = x_t - \alpha_t \tg(x_t),$$ 
\noindent where $x_t$ is a $d$-dimensional vector which we will henceforth call the \emph{model}, encoding our current beliefs about the data, and $\alpha_t$ is the step size, which controls how ``aggressive''  updates should be. 

A classic instance of this method, which we  consider in this paper, is the following. 
We are given a large set of $m$ data points $X_1, \ldots, X_m$, where to each point $X_i$, with $i \in \{1, 2, \ldots, m\}$, we associate a loss function $L_i(x)$, measuring the loss of any model $x$ at the data point $X_i$ (mapping the difference between the prediction of our model on $X_i$ with the true label to a real value), and we wish to identify a model $x^\star$ which minimizes the average loss over the dataset, that is, minimize the function  
$$ f ( x )= \frac{1}{m} \sum_{i = 1}^m L_i( x ).$$ 
This setting can be mapped to fundamental optimization tasks, such as linear regression over a given set of points, or training neural networks through backpropagation~\cite{rumelhart1986learning}. 
In both settings, we are given a large dataset, from which we pick a sample point uniformly at random in every iteration $t$. 
In every such iteration, a process reads the current version of the model $x_t$, computes the stochastic gradient $\tg(x_t)$ of this model with respect to the current sample by examining the derivative of the loss function at the point, given its label, and updates the model's components correspondingly. 
Thus, the stochastic gradients $\tg(x_t)$ correspond to the gradient of the model $x_t$ taken at a uniformly random data point $X_i$. Indeed, since the data point at which we compute the gradient is chosen uniformly at random, we have 
$$ \E[ \tg(x) ] = \frac{1}{m} \sum_{i = 1}^m \nabla L_i( x ) = \nabla f(x).$$

\paragraph{Parallel SGD.} The ability to parallelize this process at extremely large scale, up to thousands of nodes, e.g.~\cite{dean2012large}, has enabled researchers to approach new problems, and to reach super-human accuracy for several classic problems, e.g.~\cite{AlexNet, AlphaGo, he2016deep, seide2014sgd1bit, chilimbi2014project}. 
The standard way to distribute SGD to multiple compute nodes is \emph{data parallelism}:  given $n$ nodes, which we will abstract as parallel \emph{processes}, we split the dataset into $n$ partitions. Nodes process samples from their partitions in parallel, and synchronize by using a \emph{shared model} $x$. In this paper, we will focus on the case where the nodes are \emph{threads}, communicating through \emph{asynchronous shared memory}. 

In theory, data parallelism would allow the system to perform $n$ times more iterations (and model updates) per unit of time. The catch is that threads will need to \emph{synchronize} on the shared model, 
reducing scalability. 
In fact, early studies~\cite{langford2009slow} proposed using \emph{coarse-grained locking} to keep the process consistent to a sequential execution. As expected, coarse-grained locking leads to significant loss of performance. 

In this context,  a breakthrough by Niu et al.~\cite{recht2011hogwild} proved the unintuitive result that SGD should be able to converge even \emph{without coarse-grained  synchronization} to maintain model consistency, under some technical conditions, including high sparsity of the gradient updates, and under the assumption that individual updates  are applied via fetch-and-add synchronization operations. 
(This latter assumption appears necessary in general, since otherwise a delayed thread could completely obliterate all progress made up to some point, by overwriting the entire model, which resides in shared memory.) 
This work has sparked significant research on the convergence properties of asynchronous SGD, e.g.~\cite{desa2015hogwild, de2015taming, liu2015asynchronous, lian2015asynchronous, duchi2015asynchronous, zhang2017yellowfin}, improving theoretical bounds and considering  more general settings.  

In a nutshell, these results show, under various delay models and analytic assumptions, that SGD can still converge if iterations are asynchronous, that is, they cannot be ``serialized'' in any meaningful way. 
At the same time, all known convergence upper bounds, e.g.~\cite{de2015taming, lian2015asynchronous}, have a linear dependence in $\tau_{\max}$, the maximum delay between the time where a gradient update is \emph{generated} by any thread, and the point where the gradient is \emph{applied} to the model.\footnote{In shared-memory parlance, $\tau_{\max}$ is upper bounded by the \emph{interval contention}, where we define SGD iterations as individual operations. Intuitively, $\tau_{\max}$ is upper bounded by the number of iterations which can take steps between the start and end points of any fixed iteration.} 
This means that asynchronous SGD will take $\tau_{\max}$ times more iterations to converge, compared to the synchronous variant.  
Since $\tau_{\max}$ is technically only upper bounded by the \emph{length of the execution}, this upper bound appears extremely harsh. 
It is therefore natural to ask if this dependency is inherent, or whether it can be improved, yielding superior convergence rates for this classical algorithm. 

\paragraph{Contribution.} 
In this paper, we address this problem. 
Our approach is to express data-parallel SGD in the classic asynchronous shared memory model, against a strong (adaptive) adversarial scheduler, which designs schedules to delay the algorithm from converging, with full knowledge of the algorithm and random coin flips.
Under this formulation, our main results are as follows:

\begin{itemize}
    \item We show that, under standard analytic assumptions, for convex objective functions $f$, there exists a simple variant of the classic SGD algorithm which still converges under this strong adversarial model. 
    
    \item We prove that, under reasonable parameter settings, the convergence slowdown of SGD iterations due to asynchrony can be upper bounded by $\sqrt{\tau_{\max} n}$, where $\tau_{\max}$ is the \emph{maximum} interval contention over all operations and $n$ is the number of threads. 
    This result shows for the first time that the runtime dependence need not be linear in $\tau_{\max}$, even against a strong adversary, and is our main technical contribution. 
    \item We prove that, in general, the adversary can cause a convergence slowdown that is \emph{linear} in $\tau_{\max}$, if the algorithm does not decrease the step size $\alpha_t$ to offset the influence of stale gradient updates. This shows for the first time that the adversary can consistently and significantly slow down convergence by inducing delays, and that decreasing the step size (which is done by our algorithm) is in fact \emph{necessary} for good convergence under asynchrony.    

\end{itemize}

In sum, our results give new and improved upper and lower bounds on the ``price of asynchrony'' when executing the fundamental SGD algorithm in a concurrent setting. 
They show that this classic optimization tool can converge faster and with a wider range of parameters than previously known, under asynchronous iterations. 
At the same time, we exhibit a simple yet fundamental trade-off between the maximum delay in the system and the convergence rate of SGD, which governs the set of parameters under which SGD can still work efficiently. 

\paragraph{Techniques and Related Work.}
From the technical perspective, our results build upon martingale-based approaches for bounding the  convergence of SGD, e.g.~\cite{de2014global, de2015taming}. These techniques complement the classic ``regret'' bounds for characterizing the convergence of SGD, e.g.~\cite{bubeck2015convex}. 
We exploit martingale-based techniques in the asynchronous setting. To our knowledge, the only other work to employ such techniques for convex SGD is~\cite{de2015taming}, whose results we significantly extend. 
Specifically, with respect to this reference, the main departures are that 1) we consider a more challenging adaptive adversarial model, as opposed to a stochastic scheduling model; 2) eliminate the requirement that gradients contain a single non-zero entry, thereby significantly expanding the applicability of the framework; 3) reduce the linear convergence dependency in $\tau_{\max}$ to one of the form $\sqrt{\tau_{\max} n}$; 4) prove lower bounds on the slowdown due to asynchrony. 

There is an extremely vast literature studying the convergence properties of asynchronous optimization methods~\cite{recht2011hogwild, liu2015asynchronous, lian2015asynchronous, de2015taming, duchi2015asynchronous, leblond2018improved, nguyen2018sgd}, as well as efficient parallel implementations, e.g.~\cite{sallinen2016high, zhang2017yellowfin}, starting with seminal work by Bertsekas and Tsitsiklis~\cite{BT}. 
A complete survey is beyond the scope of this paper, and we therefore focus on work that is directly related to ours. 
Reference~\cite{recht2011hogwild} showed for the first time that, under strong analytical assumptions on sparsity and on the target loss function,  asynchronous SGD can still converge, and that, moreover, the convergence rate can be similar to that of the baseline under further assumptions on the parameters. 
Agarwal and Duchi~\cite{agarwalduchi} showed that, under strong ordering assumptions, delayed gradient computation does not affect the convergence of SGD.  
Lian et al.~\cite{lian2015asynchronous} provided ergodic convergence rates for asynchronous SGD for \emph{non-convex} objectives. 
Duchi et al.~\cite{duchi2015asynchronous} considered a model similar to ours, and showed that the impact of any asynchrony on the rate at which the algorithm converges is negligible, under strong technical assumptions on the convex function $f$ to be optimized, on the structure of its optimum, and on the sampling noise. 
Concurrent work~\cite{leblond2018improved, nguyen2018sgd} provides significantly more general analyses of iterative processes under asynchrony, covering several important optimization algorithms. 
With the exception of~\cite{duchi2015asynchronous}, which makes strong technical assumptions, all previous results for asynchronous SGD had a linear dependence in the maximum delay $\tau_{\max}$. We improve this dependency in this work.  

There exists significant work on mitigating the effects of asynchrony in applied settings, e.g.~\cite{zhang2016staleness, zheng2017asynchronous, DBLP:conf/allerton/MitliagkasZHR16}. A subset of these works are designed for a distributed shared memory setting, where it may be possible to examine the ``staleness'' of an update immediately before applying it, and adjust hyperparameters accordingly, and validated  empirically. 
By contrast, we consider an adversarial setting, where the scheduler actively attempts to thwart the algorithm's progress. Our lower bound applies to these works as well.

There has recently been significant work connecting machine learning and optimization with distributed computing. References~\cite{blanchard2017byzantine, yang2017byrdie, chen2017} consider distributed SGD in a \emph{Byzantine} adversarial setting, but in a message-passing system. 
In a series of papers~\cite{su2016multi, su2016fault}, Su and Vaidya have considered the problem of adding fault-tolerance to the problem of multi-agent optimization, as well as non-Bayesian optimization under asynchrony and crash failures~\cite{su2016asynchronous}. 

\section{Model}

\paragraph{Asynchronous Shared Memory.} We consider a standard asynchronous shared-memory model~\cite{AttiyaWelch},
in which $n$ threads (or processes) $P_1, \ldots, P_{n}$, communicate through atomic memory locations called registers, on which they perform atomic operations such as \lit{read}, \lit{write},  \lit{compare\&swap} and \lit{fetch\&add}. In particular, the algorithm we consider employs atomic \lit{read} and \lit{fetch\&add} operations, which are now standard in mass-produced multiprocessors. 
The \lit{fetch\&add} operation takes one argument, and returns the value of the register before the increment was performed, incrementing its value by the corresponding operand. 

As is usual, we will assume a \emph{sequentially consistent} memory model, in which once a thread returns from its invocation of a primitive (for example, a \lit{fetch\&add}), the value written by the  thread is immediately applied to shared memory, and henceforth visible by other processors. 

\paragraph{The Adversarial Scheduler.} Threads follow an algorithm  composed of
shared-memory steps and local computation, including random coin flips. 
The order of process steps is controlled by an adversarial entity we call the \emph{scheduler}. 
Time is measured in terms of the number of shared-memory steps scheduled by the adversary. 
The adversary may choose to crash a set of at most $n - 1$ processes by not scheduling them for the rest of the execution.  
A process that is not crashed at a certain step is \emph{correct}, and if it never crashes then it takes an infinite number of steps in the execution.
In the following, we assume a standard \emph{strong} adversarial scheduler, 
which can see the results of the threads' local coins when deciding the scheduling. 

\paragraph{Contention Bound.} In the following, fix a (concurrent) SGD iteration $\theta$, and let $\rho(\theta)$ be the \emph{interval contention} of the iteration $\theta$, defined as the number of SGD iterations which can execute concurrently with $\theta$. 
Let $\tau_{\max}$ be the upper bound over all these interval contention values, i.e.   $\tau_{\max} = \max_{\theta} \rho(\theta)$. 
Let $\tau_{avg}$ be an upper bound on the \emph{average} interval contention over all iterations during the (finite) execution of a program, i.e.  $\tau_{avg} = 1 / T \sum_{1 \leq \theta \leq T} \rho(\theta)$, where $T$ is the total number of iterations of the algorithm. 
It is known that $\tau_{avg} \leq 2n$, where $n$ is the number of threads~\cite{gibson2015non}.

\section{Background on Stochastic Gradient Descent}

Stochastic gradient descent (SGD) is a standard method in  optimization. Let $f: \R^d \to \R$ be a function which we want to minimize.
We have access to stochastic gradients $\tg$ of $f$, such that $\E [\tg(x)] = \nabla f(x)$. 
SGD will start at a randomly chosen point $x_0$, and converge towards the minimum by iterating the procedure 
\begin{equation}
x_{t + 1} = x_{t} - \alpha_t \tg(x_t), 
\label{eqn:sgd}
\end{equation}
\noindent where $x_t$ is the current candidate, and $\alpha_t$ is a variable step-size parameter.
Notably, this arises if we are given i.i.d. data points $X_1, \ldots, X_m$ generated from an unknown distribution $D$, and a loss function $\ell(X, \theta)$, which measures the loss of the model $\theta$ at data point $X$. We wish to find a model $\theta^*$ which  minimizes $f(\theta) = \E_{X \sim D} [\ell (X, \theta)]$, the expected loss to the data. 

Unless otherwise noted, we consider SGD for \textit{convex optimization} and with a \textit{constant learning rate}, that is $\alpha_t = \alpha$ for all $t$. We also make the following standard assumptions:
\begin{itemize}
    \item The function $f$ is strongly convex with parameter $c$, i.e. for all $x, y \in \mathbb{R}^d$:
    \begin{equation}
        \label{eqn:convex_assumption_f}
        \left(x-y\right)^{T}\left(\nabla f\left(x\right) - \nabla f \left(y\right)\right) \geq c\|x-y\|^2
    \end{equation}
    \item The stochastic gradient $\tg\left(x\right)$ is $L$-Lipschitz continuous in expectation for some $L>0$, i.e. for all $x, y\in \mathbb{R}^d$:
    \begin{equation}
        \label{eqn:lipschitz_assumption_f}
        \mathbb{E}\left(\|\tg\left(x\right) - \tg\left(y\right)\|\right) \leq L\|x-y\|
    \end{equation}
    \item The second moment of the stochastic gradients is bounded by some $M^2>0$, i.e. for all $x \in \mathbb{R}^d$: 
    \begin{equation}
        \label{eqn:grad_is_bounded_assumption_f}
        \mathbb{E}\left(\|\tg\left(x\right)\|^2\right) \leq M^2.
    \end{equation}
\end{itemize}
Classic approaches for analyzing the convergence of SGD attempt to bound the distance between the expected value of $f$ at the average of the currently generated iterates and the optimal value of the function (e.g. Theorem 6.3 in~\cite{bubeck2015convex}), showing that this distance decreases linearly with the number of iterations.

Here we consider a different approach that aims at estimating the probability that the algorithm has failed to converge to a success region around the optimal parameter value after $T$ steps. To this end, we employ a martingale-based analysis of the algorithm, an approach that has recently become a popular tool for analyzing asynchronous optimization algorithms~\cite{chaturapruek2015asynchronous,de2015taming}. Let $x^*$ be the minimizer of the function $f$. Given an $\epsilon > 0$, we denote by $$S = \{x\text{ }|\text{ }\|x-x^{*}\|^2 \leq \epsilon\}$$ the success region around this minimizer, to which we want to converge. Our analysis aims to bound the probability of the event $F_T$ that $x_i \not\in S$ for all $i \leq T$, i.e. the event that the algorithm has failed to hit the success region by time $T$.

An existing result of this type about the convergence of parallel SGD was derived in \cite{de2015taming}. Under the non-standard additional assumption that each gradient update on the parameter value only effects a single entry of $x_t$ (i.e., that the stochastic gradients contain a single non-zero entry),\footnote{Our analysis eliminates this assumption, a result which may be of independent interest.} one can obtain the following:
\begin{theorem}[\cite{de2015taming}]
\label{thm:sgd_result_basic}
Consider the SGD algorithm under the assumptions above, run for $T$ steps, with a success region $S = \{x\text{ }|\text{ }\|x-x^{*}\|^2 \leq \epsilon\}$ and with learning rate $\alpha = \frac{c\epsilon \vartheta}{M^2}$ for some constant $\vartheta \in \left(0,1\right)$. Then the probability of the event $F_T$ that $x_i \not\in S$ for all $i \leq T$ is:
\begin{equation}
    \label{eqn:conv_rate_sgd_basic}
    \mathbb{P}\left(F_T\right) \leq \frac{M^2}{c^2\epsilon\vartheta T}\log\left(e \|x_0 - x^{*}\|^2\epsilon^{-1}\right).
\end{equation}
\end{theorem}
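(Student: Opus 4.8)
\noindent\emph{Proof sketch (proposed approach).} The plan is to track the squared distance to the optimum, $w_t := \norm{x_t - x^*}^2$, show that it contracts in expectation on every iteration that has not yet reached $S$, and then convert this contraction into a bound on the first hitting time of $S$ via optional stopping applied to a logarithmically rescaled supermartingale. Concretely, writing $\mathcal{F}_t = \sigma(x_0,\dots,x_t)$ and expanding one step of~\eqref{eqn:sgd} using $\E[\tg(x_t)\mid\mathcal{F}_t] = \nabla f(x_t)$,
\[
\E[w_{t+1}\mid\mathcal{F}_t] \;=\; w_t - 2\alpha\,(x_t - x^*)^T\nabla f(x_t) + \alpha^2\,\E\!\left[\norm{\tg(x_t)}^2\mid\mathcal{F}_t\right].
\]
Applying strong convexity~\eqref{eqn:convex_assumption_f} with $y=x^*$ together with $\nabla f(x^*)=0$, and the second-moment bound~\eqref{eqn:grad_is_bounded_assumption_f}, yields the one-step drift $\E[w_{t+1}\mid\mathcal{F}_t] \le (1-2\alpha c)w_t + \alpha^2 M^2$. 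Substituting $\alpha = c\epsilon\vartheta/M^2$ makes the additive term equal to $\alpha c\epsilon\vartheta$, which on the event $\{w_t>\epsilon\}$ (i.e., iteration $t$ has not reached $S$) is dominated by $\alpha c\vartheta\,w_t$; hence outside $S$ the squared distance contracts in expectation, $\E[w_{t+1}\mid\mathcal{F}_t] \le \gamma\,w_t$ with $\gamma := 1 - \alpha c(2-\vartheta) < 1$. (One may assume the natural regime $\gamma>0$; otherwise $\E[w_{t+1}\mid\mathcal{F}_t]\le 0$ whenever $w_t>\epsilon$, so the iterate reaches $x^*$ in one step and the claim is trivially true.)

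Next I would normalize, $R_t := w_t/\epsilon$ so that ``not in $S$'' reads $R_t>1$, and linearize the drift by taking logarithms: by concavity of $\log$ and Jensen's inequality, while $R_t>1$ we have $\E[\log R_{t+1}\mid\mathcal{F}_t] \le \log\E[R_{t+1}\mid\mathcal{F}_t] \le \log R_t + \log\gamma \le \log R_t - \alpha c$. Letting $\kappa$ be the first time an iterate lies in $S$, I would then show that $Z_t := \phi(R_{t\wedge\kappa}) + \alpha c\,(t\wedge\kappa)$ is a supermartingale, where $\phi$ is a modification of $\log$ that remains concave, satisfies $\phi(r)\ge 0$ for $r\ge 1$, stays bounded below by some $\phi_{\min}$, and still admits the additive drift $\phi(\gamma r)\le\phi(r)-\alpha c$ for $r>1$ (for instance $\phi(r) = \log(r+\rho)$ for a suitable small shift $\rho$, exploiting $\vartheta<1$, or a truncation of $\log$). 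Optional stopping then gives $\E[Z_T]\le Z_0 = \phi(R_0)$.

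Finally I would read off the bound by splitting on $F_T=\{\kappa>T\}$: on $F_T$ we have $T\wedge\kappa=T$ and $R_T>1$, so $Z_T\ge\alpha c T$; on its complement $Z_T\ge\phi(R_\kappa)\ge\phi_{\min}$. Hence $\alpha c T\,\mathbb{P}(F_T) + \phi_{\min}\,\bigl(1-\mathbb{P}(F_T)\bigr) \le \phi(R_0)$, which rearranges to $\mathbb{P}(F_T)\le\bigl(\phi(R_0)-\phi_{\min}\bigr)/(\alpha c T)$; substituting $\alpha c = c^2\epsilon\vartheta/M^2$ and bounding $\phi(R_0)-\phi_{\min}\le\log\!\bigl(e\norm{x_0-x^*}^2\epsilon^{-1}\bigr)$ (using the appropriate choice of $\phi$, and noting the statement is vacuous unless $\norm{x_0-x^*}^2>\epsilon$) gives~\eqref{eqn:conv_rate_sgd_basic}. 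I expect this last step to be the main obstacle: since the stochastic gradient is only second-moment bounded, an iterate can in principle jump arbitrarily deep into $S$, so $\log R_\kappa$ is a priori unbounded below; the supermartingale must therefore be engineered — via the shifted or truncated potential $\phi$, or via a direct bound on the overshoot at time $\kappa$ — so that the potential stays controlled at first entry while retaining enough drift outside $S$, and getting the argument of the logarithm as clean as stated is precisely the delicate balance between these two requirements.
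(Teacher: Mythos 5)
Your proposal is correct and follows essentially the same route as the paper (the theorem is imported from \cite{de2015taming}, but the machinery the paper reproduces in Lemma~\ref{lemmaConvexW} and Theorem~\ref{thm:bound_on_fail_prob} is exactly what you reconstruct): a one-step contraction of $\|x_t-x^*\|^2$ outside $S$, a concave logarithmic potential, and a supermartingale/optional-stopping argument yielding $\mathbb{P}(F_T)\le \mathbb{E}[W_0]/T$. The ``delicate'' potential $\phi$ you are searching for is precisely the paper's $\textnormal{plog}$ (linear below $1$, $\log(e\,\cdot)$ above), which is concave, nonnegative at zero, and retains the required additive drift via the tangent-line inequality $\textnormal{plog}(\gamma r)\le \textnormal{plog}(r)-(1-\gamma)$ for $r>1$, so the overshoot issue you flag is resolved exactly as you suggest.
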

\noindent Note that this bound also decreases linearly with the number of iterations.
\section{Lock-Free SGD in Shared-Memory}

A standard way of parallelizing the SGD algorithm is to have multiple parallel threads execute the procedure in Equation~\ref{eqn:sgd}. 
We assume a lock-free setting, in which threads share the set of parameters (model) $X[d]$, which they can read and update entry-wise (through \lit{read} and \lit{fetch\&add} operations) concurrently, without additional synchronization. Each thread executes the steps in Algorithm \ref{algo:sgd}:

\begin{minipage}[c]{0.49\textwidth}
  \vspace{0pt}
\begin{algorithm}[H]
\DontPrintSemicolon 
 \SetAlgoLined
 {\small
\KwIn{Dataset $D$, dimension $d$, function $f$}
\KwOut{Minimizer $X$, initially $X = (0, 0, \ldots, 0)$}
\KwData{Shared copy of the parameter array $X$, of dimension $d$ }
\KwData{Iteration counter $C$, learning rate $\alpha$ }

\noindent \textbf{procedure} $\mathsf{EpochSGD}(T, \alpha)$\;
\Indp
\For{each iteration $\theta$} {
\lIf{ $ C.\fadd{}(1) \geq T$ } {\textbf{ return } }
\lFor{$j$ from $1$ to $d$} {
	$v_{\theta}[j] \gets X[j].\textnormal{read}()$
}
	$\tg_{\theta} \gets $  stochastic gradient at $v_{\theta}$, so $\mathbb{E}\left(\tg_{\theta}\right) = f\left(v_{\theta}\right)$\;
	\For{ $j$ from $1$ to $d$ } {
		\lIf{ $\tg_{\theta}[j] \neq 0$ } {
		$X[j].\fadd( - \alpha \tg_{\theta}[j] )$
		}
	}
	
\Indm
}
}
\caption{{\sc SGD} code for a single thread.}
\label{algo:sgd}
\end{algorithm}
\end{minipage}~~~~
\begin{minipage}[c]{0.44\textwidth}
\centering
\includegraphics[scale=.35]{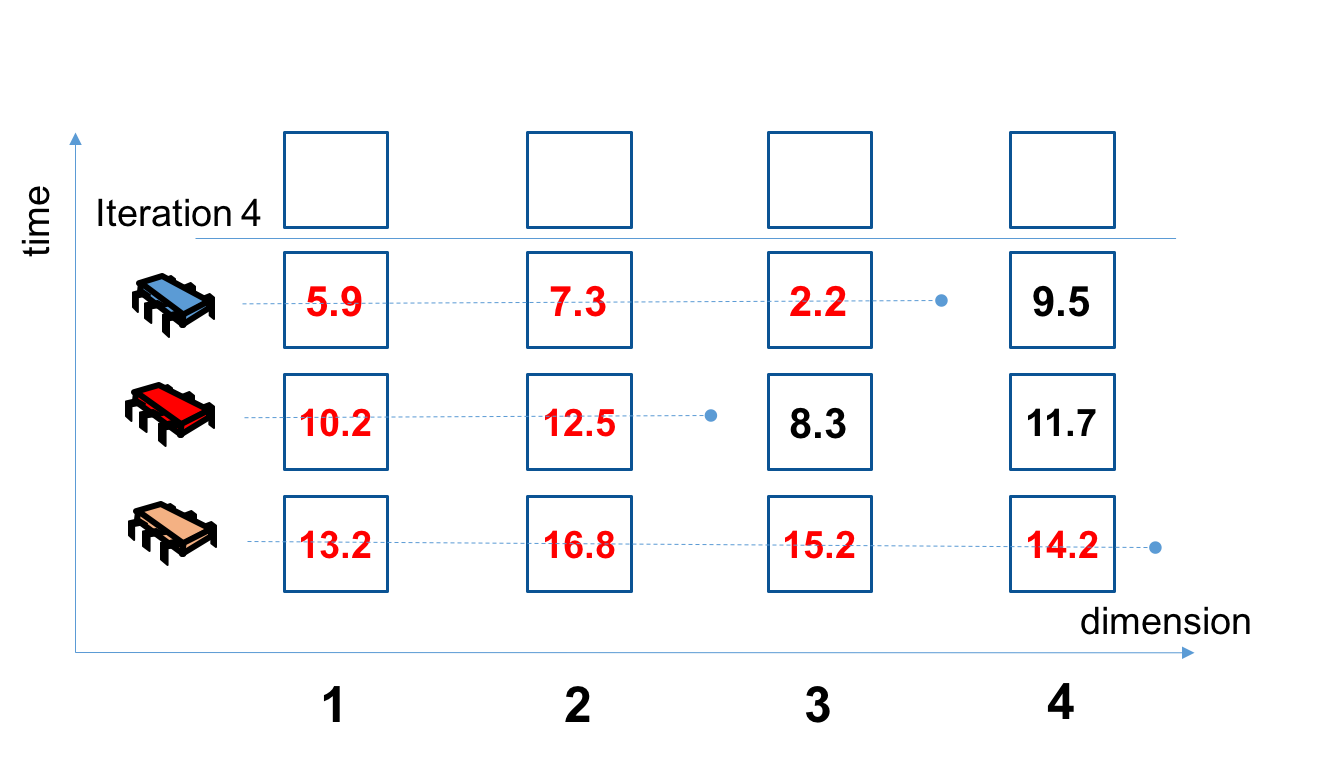}
\captionof{figure}{Algorithm modeling. The updates in red have been applied to shared memory, while the updates in black are still pending. The updates in black or red form $x_t$. The dot is the point where the thread has currently stopped updating. To obtain the value of $v_t$ at the current time, we sum values in red on each column (i.e. all values up to the dot on the row).}
\label{fig:illustration}
\end{minipage}
\vspace{1em}

We emphasize that this modeling of the algorithm is standard: virtually all papers which analyze asynchronous SGD consider this formulation, e.g.~\cite{recht2011hogwild, de2015taming, duchi2015asynchronous}. Updates are assumed to occur via \fadd{}, to avoid complete resets of the state by a delayed thread. 

\section{A Slowdown Lower Bound via Adversarial Delays}

\paragraph{Setup.} We now provide a simple argument which yields a lower bound on the achievable speedup if the adversary can delay gradients by a large $\tm$. This argument might also serve as a brisk hands-on introduction to SGD. 
We consider a standard setting, where two threads each have access to local gradient samples, and share the model.  
Assume we are trying to minimize the (convex) objective function
\[
  f(x) = \frac{1}{2} x^2,
\]
and have access to its noisy gradients
\[ \tg(x) = x - \tilde u,
\]
where $\tilde u$ is normally distributed noise with mean $0$ and variance $\sigma^2$. Observe the minimum is at $0$, and $\E [\tg(x) ] = \nabla f(x)$.
Now if $u_t$ is the stochastic gradient at time $t$, the SGD update is
\[
  x_{t+1} = x_t - \alpha \tg(x_t) = (1 - \alpha) x_t + \alpha \tilde u_t.
\]
\paragraph{Adversarial Strategy.} Suppose that the adversary is executing the following strategy. 
First, both threads generate a gradient with respect to $x_0$. The first thread then executes for $\tau$ consecutive iterations, and then the second writes the state with a gradient from the initial value. Let us now analyze the rate at which this algorithm can converge. 

\paragraph{Analysis.} After the first thread runs, the output will be
\[
  x_{\tau} = (1 - \alpha)^{\tau} x_0 + \alpha \sum_{k = 1}^{\tau} (1 - \alpha)^{\tau - k} \tilde u_{k-1}. 
\]
\noindent After the second thread merges in its stale gradient, which has the form $x_0 - \tilde u_{\tau}$, we have:
\begin{align*}
  x_{\tau + 1}
  = 
  \left( (1 - \alpha)^{\tau} - \alpha \right) x_0
  +
  \alpha \sum_{k = 1}^{\tau} (1 - \alpha)^{\tau - k} \tilde u_{k-1} + \alpha \tilde u_{\tau}.
\end{align*}
The second term of this last expression will be a zero-mean Gaussian with a variance of
\begin{align*}
  \bar \sigma^2
  &=
  \alpha^2 \sum_{k = 1}^{\tau} (1 - \alpha)^{2\tau - 2k} \sigma^2 + \alpha^2 \sigma^2 
  \\ &=
  \alpha^2 \sigma^2 \left( 1 + \sum_{k = 0}^{\tau - 1} \left( (1 - \alpha)^2 \right)^{k} \right)  
  \\ &=
  \alpha^2 \sigma^2 \left( 1 + \frac{1 - (1 - \alpha)^{2\tau}}{1 - (1 - \alpha)^2} \right).
\end{align*}
For a fixed $\alpha$, if we choose $\tau$ large enough that $2 (1 - \alpha)^{\tau} \le \alpha$, and suppose for simplicity that $\sigma = 0$, then we can get
\[
  \norm{x_{\tau+1}} \ge \frac{\alpha}{2} \norm{x_0}, \textnormal{ versus } \norm{x_{\tau+1}} = (1 - \alpha)^{\tau} \norm{x_0} 
\]

\noindent in the case with no adversary. To compare these rates, we take the logarithm, and obtain a slowdown factor of
\[
  \frac{\log\left( (1 - \alpha)^{\tau} \right)}{ \log\left( \frac{\alpha}{2} \right) }
  =
  \tau
  \frac{\log(1 - \alpha)}{ \log(\alpha) - \log(2) }
  =
  \Omega(\tau),
\]
which implies an $\Omega(\tau)$ factor slowdown is possible from a delay of $\tau$.
This shows that with a maximum delay $\tau_{\max} = \tau$, the adversary can achieve an asymptotic slowdown that is linear in $\tau$. We conclude as follows.
\begin{theorem}
    \label{thm:lb}
    Given an instance of the lock-free SGD Algorithm in~\ref{algo:sgd} with fixed learning rate $\alpha$, there exists an adversarial strategy with maximum delay $\tau_{\max} = O( \log (\alpha ) / \log ( 1 - \alpha ))$ such that the algorithm converges $\tau_{\max}$ times slower than the sequential variant. 
\end{theorem}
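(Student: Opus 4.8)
The plan is to turn the three‑phase sketch (\textbf{Setup}/\textbf{Adversarial Strategy}/\textbf{Analysis}) preceding the statement into a self‑contained argument inside the asynchronous shared‑memory model of Section~2. First I would instantiate Algorithm~\ref{algo:sgd} with $d = 1$, objective $f(x)=\tfrac12 x^2$, and stochastic gradients $\tg(x)=x-\tilde u$ with $\tilde u\sim\mathcal N(0,\sigma^2)$, executed by two correct threads $P_1,P_2$ that share the single register $X[1]$ initialized at some $x_0\neq 0$ (take the iteration bound $T$ large enough that neither thread returns early via the counter $C$). Since every noise variate entering iteration $\tau+1$ is independent of, and has mean zero conditioned on, the deterministic ``drift'' part of $x_{\tau+1}$, the quantity $\mathbb{E}\norm{x_{\tau+1}}^2$ is minimized at $\sigma=0$; hence it suffices to prove the distance bound for the noiseless chain, and the claim for $\sigma>0$ follows a fortiori.

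For the core computation I would unfold the execution under the stated schedule. Both threads first read $X[1]=x_0$ and compute their (identical, now stale) gradients $\tg=x_0$; the adversary then runs $P_1$ alone for $\tau$ complete iterations. Because $P_1$ is the only thread taking steps, each of its iterations is a faithful sequential SGD step on the register (in the noiseless case $\tg\neq 0$ always, so the \fadd\ fires every time), so after this phase $X[1]=x_\tau=(1-\alpha)^\tau x_0$. The adversary now lets $P_2$ perform its single pending $\fadd(-\alpha x_0)$, giving
\[
  x_{\tau+1}=(1-\alpha)^\tau x_0-\alpha x_0=\bigl((1-\alpha)^\tau-\alpha\bigr)x_0.
\]
The hypothesis $2(1-\alpha)^\tau\le\alpha$, equivalently $\tau\ge\log(\alpha/2)/\log(1-\alpha)$, forces $0\le(1-\alpha)^\tau\le\alpha/2$, hence $\alpha/2\le|(1-\alpha)^\tau-\alpha|\le\alpha$ and $\norm{x_{\tau+1}}\ge\tfrac\alpha2\norm{x_0}$. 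By construction $P_2$'s is the only stale update and it is applied exactly $\tau$ iterations after it is generated, so the maximum interval contention realized by this schedule is $\tm=\tau$.

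It then remains to read off the slowdown. In an adversary‑free (sequential) execution one has $\norm{x_t}=(1-\alpha)^t\norm{x_0}$, so the radius $\tfrac\alpha2\norm{x_0}$ is reached already after $n^\star=\log(\alpha/2)/\log(1-\alpha)$ iterations, i.e.\ the per‑iteration log‑contraction is $\log(1-\alpha)$. The adversarial execution, after $\tau+1$ iterations, is still at distance $\ge\tfrac\alpha2\norm{x_0}$, so its per‑iteration log‑contraction is at best $\log(\alpha/2)/(\tau+1)$; equivalently, reaching the radius $\tfrac\alpha2\norm{x_0}$ costs the adversarial run a factor
\[
  \frac{\tau+1}{n^\star}=(\tau+1)\,\frac{\log(1-\alpha)}{\log(\alpha)-\log 2}
\]
more iterations than the sequential one. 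For $\alpha$ a fixed constant bounded away from $0$ and $1$ the fraction is a positive constant, so this is $\Omega(\tau)=\Omega(\tm)$; the same constant makes $n^\star=\Theta(1)$, so the minimal admissible delay is $\tm=\Theta(\log\alpha/\log(1-\alpha))$, and with it (or any larger delay) the algorithm converges $\Omega(\tm)$ times slower than the sequential variant, which is the assertion.

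I expect the main obstacle to be not the algebra but pinning down the comparison precisely: one must fix a convergence criterion (the number of iterations to enter a given neighbourhood of $x^\star$), keep $\alpha$ a constant so that $\log(1-\alpha)/\log(\alpha/2)$ is bounded below, and check that the chosen $\tau$ is simultaneously $\ge\log(\alpha/2)/\log(1-\alpha)$ and gives the claimed order for $\tm$. The remaining care is model bookkeeping: verifying that interleaving a single active thread's $d = 1$ read/\fadd\ pair reproduces the exact recursion $x_{t+1}=(1-\alpha)x_t$, that no operation other than $P_2$'s has interval contention exceeding $\tau$, and --- for $\sigma>0$ --- that the Gaussian contribution is independent of and orthogonal to the drift, so that discarding it can only decrease $\mathbb{E}\norm{x_{\tau+1}}^2$.
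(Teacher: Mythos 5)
Your proposal is correct and follows essentially the same route as the paper: the same quadratic objective with noisy gradients, the same two-thread schedule in which one thread runs $\tau$ clean iterations before the other merges a gradient stale by $\tau$, the same $\sigma=0$ reduction yielding $\norm{x_{\tau+1}}\ge\tfrac{\alpha}{2}\norm{x_0}$ versus $(1-\alpha)^\tau\norm{x_0}$, and the same logarithmic comparison giving the $\Omega(\tau)$ slowdown. The additional bookkeeping you supply (verifying the realized interval contention is $\tau$, justifying the reduction to the noiseless chain) only makes the paper's informal sketch more rigorous.
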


\section{Convergence Upper Bounds in Asynchronous Shared Memory}

\subsection{Preliminaries}

\paragraph{Notation and Iteration Ordering.} We now introduce some notation, and a few basic claims about the above concurrent process. 
First, we define an order on the above iterations, performed possibly by distinct threads, by the time at which the iteration performs its first \fadd{} operation, on the first model component $X[1]$. (Here we are using the sequential consistency property of the memory model.) 
This ordering induces a useful total order between iterations: for any integer $t \geq 1$, iteration $t$ is the $t$th iteration to complete its \fadd{} on $X[1]$. 
We now note that all of these iterations up to iteration $t$ must have completed their computation, but may not have completed \emph{writing their updates} to $X$ by the time when iteration $t + 1$ starts. At most $n$ of these iterations may be incomplete at any given time. We formalize this as follows.

\begin{lemma}
\label{lem:order}
	Let iteration $t$ be the $t$th iteration to update $X[1]$.  This is a total order on the iterations. We say that an iteration is \emph{incomplete} at a given point in the execution if it has performed its first update (on $X[1]$), but has not completed its last update (on $X[d]$). For any $t \geq 1$, at most $n$ iterations with indices $\leq t$ can be incomplete.   
\end{lemma}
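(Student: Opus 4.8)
The plan is to establish the three assertions of the lemma in order; all follow from the structure of Algorithm~\ref{algo:sgd} together with the sequential consistency of the memory model. First, for well-definedness of the order: since $X[1]$ is a single atomic register, the memory model serializes all \fadd{} operations applied to it into a unique total order, with no ties. Restricting attention to the iterations that ever update $X[1]$ (in the dense-gradient modeling of Figure~\ref{fig:illustration} this is every iteration; otherwise one simply indexes the subsequence of iterations that touch $X[1]$), I define iteration $t$ to be the one whose \fadd{} on $X[1]$ is the $t$th in this serialization. This assigns a distinct positive integer to each such iteration, i.e.\ it is a total order, which gives the first claim.

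For the bound on incomplete iterations, the key structural fact is that each thread executes its iterations strictly sequentially: in the for-loop of Algorithm~\ref{algo:sgd}, a thread begins a new iteration $\theta'$ only after it has returned from the last \fadd{} of its previous iteration $\theta$ (the one on $X[d]$). Hence at any single instant $s$ of the execution, every thread has at most one iteration that has performed its $X[1]$-update but not yet its $X[d]$-update, i.e.\ at most one incomplete iteration; summing over the $n$ threads, at most $n$ iterations are incomplete at instant $s$. To connect this to the statement ``for any $t \ge 1$'', I take $s$ to be any instant after iteration $t$ applies its \fadd{} on $X[1]$ but before iteration $t+1$ does. Every iteration with index $\le t$ has, by definition of the order, already performed its $X[1]$-update by $s$, hence is either complete or incomplete at $s$; the incomplete ones among them form a subset of all iterations incomplete at $s$, so there are at most $n$ of them, as claimed.

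I do not anticipate a serious obstacle: the lemma is essentially a bookkeeping statement about the process, and the only points requiring care are (i) invoking sequential consistency to make the ordering of single-register operations well-defined and tie-free, and (ii) the per-thread sequentiality of Algorithm~\ref{algo:sgd}, which is precisely what rules out a thread having two iterations ``in flight'' simultaneously. The mild edge case of gradients whose first or last coordinate vanishes is handled by the restriction/reindexing mentioned above, or simply by adopting the dense-update modeling of Figure~\ref{fig:illustration} used elsewhere in the paper.
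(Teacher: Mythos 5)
Your proposal is correct and follows essentially the same reasoning the paper gives (informally, in the paragraph preceding Lemma~\ref{lem:order}, which is stated without a separate formal proof): sequential consistency serializes the atomic \fadd{} operations on $X[1]$ to give the total order, and per-thread sequentiality of Algorithm~\ref{algo:sgd} means each of the $n$ threads has at most one iteration in flight, so at most $n$ iterations are incomplete at any instant. Your extra care about iterations whose first or last gradient coordinate vanishes is a reasonable tightening of a detail the paper glosses over.
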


\noindent By assumption the maximal interval contention is bounded by $\tau_{\max}$. However, since at most $n$ threads can run at the same time, it is intuitive that the average contention should be $\mathcal{O}\left(n\right)$. We formalize this via the following:

\begin{lemma}
\label{lem:contention}
    Fix a parameter $K$, and an arbitrary time interval $I$ during which exactly 
    $Kn$ consecutive SGD iterations start. 
    We call an SGD iteration $\theta$ \emph{bad} if more than $Kn$ iterations start between its start time and end time. Otherwise, an SGD iteration is \emph{good}. 
    Then, the number of bad iterations which complete during $I$ is less than $n$. 
\end{lemma}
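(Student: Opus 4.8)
The plan is to combine the total order on iterations from Lemma~\ref{lem:order} with the observation that any \emph{bad} iteration finishing inside $I$ was already ``in flight'' at the moment $I$ began; the number of such iterations is then controlled by the fact that at most $n$ iterations can be incomplete simultaneously, and a one-step refinement upgrades the bound from $n$ to $n-1$.

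Concretely, I would index the iterations $\theta_1,\theta_2,\dots$ in the order in which they perform their first \fadd{} on $X[1]$ (Lemma~\ref{lem:order}), writing $s_t$ for the time of that first \fadd{} and $e_t\ge s_t$ for the time of the last \fadd{} of $\theta_t$. Since exactly $Kn$ iterations start inside $I=[a,b]$, they are $\theta_{j+1},\dots,\theta_{j+Kn}$ for some $j\ge 0$, with $s_j<a\le s_{j+1}$ and $s_{j+Kn}\le b<s_{j+Kn+1}$; when $j=0$ there is no iteration starting before $I$ and the conclusion is immediate, so assume $j\ge 1$. The key step is: if $\theta_i$ is bad and $e_i\in[a,b]$, then $i\le j-1$. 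Indeed, by definition more than $Kn$ iterations start strictly between $s_i$ and $e_i$; each of them starts after $s_i$ (hence has index $>i$) and before $e_i\le b<s_{j+Kn+1}$ (hence has index $\le j+Kn$), so $\{i+1,\dots,j+Kn\}$ has more than $Kn$ elements, forcing $i\le j-1$. Consequently $s_i\le s_{j-1}<s_j<a\le e_i$, i.e.\ $\theta_i$ has performed its first \fadd{} but not its last one at the instant $s_j$, so it is incomplete then.

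To finish, I would count incomplete iterations at the instant $s_j$ at which $\theta_j$ performs its first \fadd{} on $X[1]$: at that instant $\theta_j$ is incomplete, every bad iteration completing inside $I$ is incomplete by the previous step, and $\theta_j$ is distinct from all of those since their indices are $\le j-1$. Hence the incomplete iterations at time $s_j$ include $\theta_j$ together with all bad iterations completing in $I$, and these are distinct, so Lemma~\ref{lem:order} (at most $n$ incomplete iterations at any instant) gives at most $n-1<n$ of the latter. The part I expect to require the most care is this ``stacking'' step: one must use $e_i\ge a$ together with $s_i<s_j$ to place all the relevant bad iterations, plus $\theta_j$, simultaneously in flight exactly when $\theta_j$ starts, so that a single invocation of the $\le n$ bound catches them all. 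One should also fix the endpoint convention in the definition of ``bad'': with the natural reading that ``more than $Kn$ iterations start between $\theta$'s start and end'' means more than $Kn$ \emph{other} iterations start strictly inside $(s_i,e_i)$, we obtain index $\le j-1$ and hence the strict inequality $<n$; counting $\theta_i$ itself among the $Kn$ would only yield index $\le j$ and the weaker bound $\le n$. A degenerate case in which $\theta_j$ finishes instantaneously (a gradient with a single nonzero coordinate, in position $1$) is absorbed by reading ``incomplete at $s_j$'' as ``incomplete at the instant $\theta_j$ performs its first \fadd{}''.
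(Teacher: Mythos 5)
Your proof is correct, but it takes a genuinely different route from the paper's. The paper argues by contradiction with a pigeonhole on \emph{threads}: if $n$ or more bad iterations complete during $I$, some thread $p$ completes two of them; since $p$ runs its iterations sequentially, the second one's entire interval is contained in $I$, so at most $Kn$ iterations can start during it, contradicting badness. That argument is three lines long, needs nothing beyond per-thread sequentiality, and is agnostic to the exact meaning of ``start time.'' You instead prove the structural fact that every bad iteration completing in $I$ must have performed its first \fadd{} \emph{before} $I$ begins (index $\le j-1$), so all of them are simultaneously incomplete at the single instant $s_j$, and then you invoke the ``at most $n$ simultaneously incomplete iterations'' bound of Lemma~\ref{lem:order}, sharpened to $n-1$ because the thread running $\theta_j$ is occupied and cannot own any of them. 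Your version is longer and is tied to the convention that ``start'' means the first \fadd{} on $X[1]$ (under the alternative reading where an iteration starts when it begins its reads, your step ``starts after $s_i$ hence has index $>i$'' would not follow); but in exchange it makes the endpoint conventions explicit, isolates exactly where the $<n$ versus $\le n$ distinction comes from, and yields the stronger intermediate statement that bad iterations are precisely those already in flight when $I$ opens. Your handling of the degenerate single-\fadd{} case is slightly hand-wavy as written, but the underlying point --- that $\theta_j$'s thread is busy at time $s_j$ and therefore contributes no bad iteration --- is sound regardless of whether $\theta_j$ is formally ``incomplete'' at that instant. (Incidentally, your care here exposes a small off-by-one in the paper's own proof: with exactly $n$ bad iterations and $n$ threads, the pigeonhole does not yet force one thread to complete two of them.)
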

\begin{proof}
    Assume for contradiction that the number of bad iterations is $n$ or larger. 
    Then, there must exist a thread $p$ which completes two of these bad iterations during this interval. Denote the second iteration by $\theta$. 
    Since the two iterations by $p$ cannot be concurrent with each other, 
    the maximum number of iterations which can be concurrent with $\theta$ (except itself) is at most $Kn - 2$. 
    Hence, $\theta$ cannot be bad, a contradiction. 
\end{proof}

\paragraph{Inconsistent Views $v_t$ and Accumulators $x_t$.}
Let $\tg_{t}$ be the gradient vector which the corresponding thread executing iteration $t$ wishes to add to $X$, and $v_t$ be this thread's \emph{view} of the model $X$ when generating this gradient, composed out of the values the thread read for each entry. 
Importantly, we define the auxiliary (global) vector $x_t =\sum_{k = 1}^t \tg_k$, containing all of the updates which various \emph{started} iterations wish to apply to the model $X$ up to $t$. We note that the view $v_t$ which iteration $t$ uses to generate its gradient update can consist of a possibly inconsistent set of updates across the various components of $X$. However, at all times 1) all of these updates must be contained in $x_t$, in the sense that they feature in $\tg_k$ for some $k\leq t$; and 2) if we denote by $t'$ the highest iteration whose update is read by $v_t$, we must have that there can be at most $\ell(\theta)$ updates with indices $< t'$ which $v_t$ sees as incomplete during its scan. This is illustrated in Figure \ref{fig:illustration}.

\paragraph{SGD Convergence.} We now analyze the convergence of SGD under this lock-free model. Following \cite{de2015taming}, we employ a martingale approach for proving convergence rates of SGD.

Given a probability space $\left(\Omega, \mathcal{F}, \mathbb{R}\right)$, a real-valued stochastic process $W:\mathbb{N}\times\Omega \rightarrow \mathbb{R}$ is called a supermartingale if $\mathbb{E}\left(W_{t+1}|W_{t}, W_{t-1}, ..., W_0\right) \leq W_{t}$ for all $t \geq 0$. Here we consider supermartingales of the form $W_t\left(x_t, ..., x_0\right)$, defined based on the state of the optimization algorithm. Intuitively, $W_t$ represents our unhappiness about the state of the algorithm at time $t$. More precisely, we have the following:
\begin{definition}[\cite{de2015taming}]
Given a stochastic optimization algorithm, a non-negative process $W_t:\mathbb{R}^{d\times t} \rightarrow \mathbb{R}$ is a rate supermartingale with horizon $B$, if two conditions hold. Firstly, it is a supermartingale, i.e. for all $x_t, ..., x_0$ and for all $t\leq B$:
\begin{equation}
\label{eqn:supermartingale_defn}
    \mathbb{E}\left(W_{t+1}\left(x_t - \alpha \tg\left(x_t\right), x_{t}, ..., x_0\right)\right) \leq W_{t}\left(x_t, x_{t-1}, ..., x_0\right),
\end{equation}
where expectation is taken with respect to the randomness at time $t$ and conditional on the past. Secondly, for any time $T\leq B$ and any sequence $x_T, ..., x_0$, if the algorithm has not succeeded by time $T$ (i.e. $x_t\not\in S$ for all $t \leq T$), then:
\begin{equation}
    \label{eqn:rate_property}
    W_T\left(x_T, x_{T-1}, ..., x_0\right) \geq T.
\end{equation}
\end{definition}
\noindent The main result in~\cite{de2015taming} shows how constructing such a supermartingale for an optimization algorithm can be used to obtain a bound on the probability that the algorithm has not visited the success region after a certain number of iterations. Under the considered stochastic scheduling model, the authors employ a parameter $\tau$, that denotes the \emph{worst-case expected delay} caused by the parallel updates. Under the additional assumption that the stochastic gradients contain a single non-zero entry, the following result is derived:
\begin{theorem}[\cite{de2015taming}]
\label{thm:sgd_result_buckwild}
Consider the SGD algorithm for optimizing a convex function $f$ that satisfies the assumptions above and under the asynchronous model of~\cite{de2015taming}, with a success region $S = \{x\text{ }|\text{ }\|x-x^{*}\|^2 \leq \epsilon\}$ and with learning rate $\alpha = \frac{c\epsilon \vartheta}{M^2 + 2LM\tau\sqrt{\epsilon}}$ for some constant $\vartheta \in \left(0,1\right)$. Then the probability of the event $F_T$ that $x_i \not\in S$ for all $i \leq T$ is:
\begin{equation}
    \label{eqn:conv_rate_sgd_buckwild}
    \mathbb{P}\left(F_T\right) \leq \frac{M^2 + 2LM\tau \sqrt{\epsilon}}{c^2\epsilon\vartheta T}\log\left(e \|x_0 - x^{*}\|^2\epsilon^{-1}\right).
\end{equation}
\end{theorem}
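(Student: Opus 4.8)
The plan is to recall the martingale argument of~\cite{de2015taming}: one exhibits a rate supermartingale $W_t$ for the asynchronous process and then converts the supermartingale inequality~\eqref{eqn:supermartingale_defn} together with the rate property~\eqref{eqn:rate_property} into the failure bound, exactly as in the sequential Theorem~\ref{thm:sgd_result_basic}. The supermartingale has the same functional form as in the sequential case: on the event that the algorithm has not yet succeeded by time $t$, set $W_t(x_t,\dots,x_0) = \frac{1}{\alpha c}\log\big(e\|x_t-x^\star\|^2/\epsilon\big) + t$, and extend $W$ in the standard way once the success region is reached so that $W$ stays non-negative and the supermartingale property still holds. Given such a $W$, the conclusion is immediate: $W_t\ge 0$ and $\mathbb{E}[W_{t+1}\mid\mathcal F_t]\le W_t$ give $\mathbb{E}[W_T]\le W_0$, on $F_T$ the rate property gives $W_T\ge T$, hence $\mathbb{P}(F_T)\le \mathbb{E}[W_T]/T\le W_0/T$, which is~\eqref{eqn:conv_rate_sgd_buckwild} after substituting $1/(\alpha c)=(M^2+2LM\tau\sqrt{\epsilon})/(c^2\epsilon\vartheta)$ and $W_0=\tfrac{1}{\alpha c}\log(e\|x_0-x^\star\|^2/\epsilon)$.

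The work is in verifying~\eqref{eqn:supermartingale_defn} for the asynchronous update $x_{t+1}=x_t-\alpha\tg(v_t)$ rather than the clean update $x_t-\alpha\tg(x_t)$. I would expand $\|x_{t+1}-x^\star\|^2 = \|x_t-x^\star\|^2 - 2\alpha(x_t-x^\star)^\top\tg(v_t) + \alpha^2\|\tg(v_t)\|^2$, take the expectation conditional on the past using $\mathbb{E}[\tg(v_t)\mid\mathcal F_t]=\nabla f(v_t)$ — which holds in the stochastic scheduling model of~\cite{de2015taming} since the data point is sampled after the view $v_t$ is fixed — and split $(x_t-x^\star)^\top\nabla f(v_t) = (x_t-x^\star)^\top\nabla f(x_t) - (x_t-x^\star)^\top(\nabla f(x_t)-\nabla f(v_t))$. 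Strong convexity~\eqref{eqn:convex_assumption_f} lower-bounds the first term by $c\|x_t-x^\star\|^2$; the smoothness implied by~\eqref{eqn:lipschitz_assumption_f} (via Jensen, $\|\nabla f(x)-\nabla f(y)\|\le \mathbb{E}\|\tg(x)-\tg(y)\|\le L\|x-y\|$) bounds the second by $L\|x_t-x^\star\|\,\|x_t-v_t\|$; and~\eqref{eqn:grad_is_bounded_assumption_f} handles $\alpha^2\|\tg(v_t)\|^2$. This yields $\mathbb{E}\|x_{t+1}-x^\star\|^2 \le \|x_t-x^\star\|^2(1-2\alpha c) + 2\alpha L\|x_t-x^\star\|\,\mathbb{E}\|x_t-v_t\| + \alpha^2 M^2$.

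The quantitative heart of the proof is the staleness bound $\mathbb{E}\|x_t-v_t\|\le \alpha M\tau$: the view $v_t$ differs from the accumulator $x_t$ only by gradient updates that have been issued but not yet written, and in the delay model at most $\tau$ such updates are missing in expectation, each of expected norm at most $\alpha M$ by~\eqref{eqn:grad_is_bounded_assumption_f}. This is precisely where the single-non-zero-entry assumption enters: it makes the accounting exact, since each missing update perturbs exactly one coordinate of $v_t$, so the discrepancy $x_t-v_t$ is literally a sum of at most $\tau$ rescaled gradients. Plugging this in, using $x_t\notin S$ (so $\|x_t-x^\star\|\le\|x_t-x^\star\|^2/\sqrt{\epsilon}$ and $\alpha^2M^2/\|x_t-x^\star\|^2\le\alpha^2M^2/\epsilon$), and finally the choice $\alpha=c\epsilon\vartheta/(M^2+2LM\tau\sqrt{\epsilon})$, the whole bracket collapses to $1-2\alpha c + \alpha c\vartheta = 1-\alpha c(2-\vartheta)$, i.e. $\mathbb{E}\|x_{t+1}-x^\star\|^2 \le \|x_t-x^\star\|^2\big(1-\alpha c(2-\vartheta)\big)$. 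By Jensen's inequality and $\log(1-z)\le -z$ this gives $\mathbb{E}\log\|x_{t+1}-x^\star\|^2 \le \log\mathbb{E}\|x_{t+1}-x^\star\|^2 \le \log\|x_t-x^\star\|^2 - \alpha c(2-\vartheta)$; multiplying by $1/(\alpha c)$ and adding $t+1$ yields $\mathbb{E}[W_{t+1}\mid\mathcal F_t]\le W_t-(1-\vartheta)\le W_t$, and the rate property $W_T\ge T$ on $F_T$ holds since the logarithmic term is non-negative whenever $\|x_T-x^\star\|^2>\epsilon$.

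I expect the main obstacle to be the staleness bound and its measurability bookkeeping: one must make precise, in the delay model of~\cite{de2015taming}, exactly which updates $v_t$ can be missing relative to $x_t$ (cf. Lemma~\ref{lem:order}), show their expected count is controlled by $\tau$, and arrange the filtration so that both $\mathbb{E}[\tg(v_t)\mid\mathcal F_t]=\nabla f(v_t)$ and the extraction of $\|x_t-v_t\|$ from the expectation are legitimate. Everything else — strong convexity, smoothness, the second-moment bound, Jensen, the algebra that makes the $\alpha$ choice cancel, and the passage from the supermartingale to~\eqref{eqn:conv_rate_sgd_buckwild} — is routine and identical to the $\tau=0$ case.
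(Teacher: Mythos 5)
The paper never proves Theorem~\ref{thm:sgd_result_buckwild}: it is imported verbatim from~\cite{de2015taming} as background, so the only in-paper object to compare against is the Appendix proof of Theorem~\ref{thm:bound_on_fail_prob}, which adapts the same machinery to shared memory. Your route is genuinely different from that machinery. The reference (and the paper's own Theorem~\ref{thm:bound_on_fail_prob}) keeps the \emph{sequential} rate supermartingale $W_t$ untouched and builds a corrected process $V_t = W_t - \alpha^2 HLMC\sqrt{d}\,t + \dots$ whose extra terms amortize the staleness over time, using only the $H$-Lipschitz property of $W$ to absorb the discrepancy $\alpha H\|\tg(x_t)-\tg(v_t)\|$; the asynchrony never touches the internal structure of $W$. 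You instead re-derive the one-step recursion from scratch for the perturbed update $x_{t+1}=x_t-\alpha\tg(v_t)$, via the inner-product split, strong convexity, the smoothness bound obtained from~\eqref{eqn:lipschitz_assumption_f} by Jensen, and the staleness bound $\mathbb{E}\|x_t-v_t\|\le\alpha M\tau$, then pass to the logarithm. Your algebra is right: the bracket does collapse to $1-\alpha c(2-\vartheta)$ under the stated $\alpha$, the prefactor $1/(\alpha c)$ reproduces exactly the constant in~\eqref{eqn:conv_rate_sgd_buckwild}, and you correctly locate where the single-non-zero-entry assumption is needed (it is precisely its removal that costs the paper the $\sqrt{d}$ factor in~\eqref{bound_on_xt_vt}). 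Your approach is more elementary and self-contained; the Lipschitz-correction approach is what generalizes to the adversarial shared-memory setting of this paper, because there the staleness at step $t$ cannot be bounded in expectation conditionally on the past.

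The one real gap is the step you yourself flag: substituting $\mathbb{E}\|x_t-v_t\|\le\alpha M\tau$ \emph{inside} the conditional supermartingale inequality. The supermartingale property~\eqref{eqn:supermartingale_defn} must hold conditionally on a filtration that makes $v_t$ measurable (you need $v_t$ fixed to assert $\mathbb{E}[\tg(v_t)\mid\mathcal{F}_t]=\nabla f(v_t)$), but under that same conditioning $\|x_t-v_t\|$ is a constant, not a quantity you may replace by its mean over the delay randomness. Making this rigorous requires either a careful two-layer conditioning on the delay variables of~\cite{de2015taming}, or exactly the amortization device used in the proof of Theorem~\ref{thm:bound_on_fail_prob}, where the unconditional bound of Lemma~\ref{lem:lemma_on_tau_t} (respectively $\mathbb{E}\tilde\tau\le\tau$ in the reference) is only ever invoked against terms that have been deferred into the correction sum. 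As a blind reconstruction of the cited result your sketch is essentially correct, but this bookkeeping is the part that cannot be waved through.
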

Again, the bound decreases linearly with the number of iterations. However, there is an additive term that increases with $\tau$. In particular, the bound on the failure probability is worse than in the sequential SGD case described previously.
\subsection{Convergence Analysis}
We now apply a martingale analysis similar to the one in \cite{de2015taming} to obtain results about the rate of convergence of SGD under the Asynchronous Shared Memory model. We denote the maximum delay at time $t$ by $\tau_t \geq 0$. We also assume that all $\tau_t$ are bounded by some maximum $\tau_{\textrm{max}}$, i.e. $\tau_t \leq \tau_{\max}$ for all $t$.

Note that at any time $t$, the gradient is computed based on a view $v_t$ that might be missing updates from only the last $\tau_t$ iterations. Therefore,
\begin{align*}
\|x_t - v_t\|_1 & \leq \sum_{k=1}^{\tau_t}\|x_{t-k+1} - x_{t-k}\|_1 \\ & \leq \sum_{k=1}^{t}\|x_{t-k+1} - x_{t-k}\|_1\mathbbm{1}_{\{\tau_t \geq k\}}
\end{align*}
Now since for any $x \in \mathbb{R}^d$, we have $\|x\|_2 \leq \|x\|_1 \leq \sqrt{d}\|x\|_2$, it follows that:
\begin{align}
\label{bound_on_xt_vt}
\|x_t - v_t\| \leq \sqrt{d}\sum_{k=1}^{t}\|x_{t-k+1} - x_{t-k}\|\mathbbm{1}_{\{\tau_t \geq k\}}
\end{align}
For the subsequent analysis we will also need the following:
\begin{lemma}
\label{lem:lemma_on_tau_t}
For any $t$:
\begin{align}
\label{eqn:bound_on_indicators}
\sum_{m=1}^{\infty} \mathbbm{1}_{\{\tau_{t+m} \geq m\}} = \sum_{m=1}^{\tau_{\textrm{max}}} \mathbbm{1}_{\{\tau_{t+m} \geq m\}} \leq 2\sqrt{\tau_{\textrm{max}}n}.
\end{align}
\end{lemma}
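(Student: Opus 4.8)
The plan is to split the finite sum at a threshold $Kn$, bound the low-$m$ part trivially and the high-$m$ part through the contention bound of Lemma~\ref{lem:contention}, and then optimize $K$. First, the claimed equality is immediate: for $m>\tau_{\max}$ we have $\tau_{t+m}\le\tau_{\max}<m$, so $\mathbbm{1}_{\{\tau_{t+m}\ge m\}}=0$ and the series collapses to $\sum_{m=1}^{\tau_{\max}}\mathbbm{1}_{\{\tau_{t+m}\ge m\}}$. If $\tau_{\max}\le n$ we are already done, since that sum is at most $\tau_{\max}\le\sqrt{\tau_{\max}n}\le 2\sqrt{\tau_{\max}n}$; so assume $\tau_{\max}>n$ and set $K\approx\sqrt{\tau_{\max}/n}$ (take $K=\lceil\sqrt{\tau_{\max}/n}\,\rceil$; the rounding only affects constants). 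Write $\sum_{m=1}^{\tau_{\max}}\mathbbm{1}_{\{\tau_{t+m}\ge m\}}=\sum_{m\le Kn}\mathbbm{1}_{\{\tau_{t+m}\ge m\}}+\sum_{Kn<m\le\tau_{\max}}\mathbbm{1}_{\{\tau_{t+m}\ge m\}}$, with the second sum empty if $\tau_{\max}\le Kn$. The first sum is at most $Kn\approx\sqrt{\tau_{\max}n}$, being a sum of at most $Kn$ indicators each at most $1$.

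For the second sum I claim that each index $m$ with $Kn<m\le\tau_{\max}$ and $\tau_{t+m}\ge m$ makes iteration $t+m$ a \emph{bad} iteration in the sense of Lemma~\ref{lem:contention} with parameter $K$. Indeed, $\tau_{t+m}\ge m>Kn$ means the view $v_{t+m}$ misses an update generated $\tau_{t+m}$ iterations earlier, i.e. the update of iteration $q:=t+m-\tau_{t+m}+1\le t+1$; this iteration $q$ is still writing when iteration $t+m$ reads the relevant model component, so $q$ and $t+m$ are concurrent, and all but at most $n$ of the $\approx\tau_{t+m}$ iterations with index between $q$ and $t+m$ must therefore also start during iteration $t+m$'s execution interval — here Lemma~\ref{lem:order} caps the number of incomplete low-index iterations that may have finished early. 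Hence more than $Kn-O(n)$ iterations start during iteration $t+m$'s interval, so (enlarging the ``bad'' threshold by a constant factor, which is harmless) iteration $t+m$ is bad. Now tile $\{Kn+1,\dots,\tau_{\max}\}$ into at most $\lceil\tau_{\max}/(Kn)\rceil\le 2\tau_{\max}/(Kn)$ windows of $Kn$ consecutive iterations, apply Lemma~\ref{lem:contention} to each window (whose iterations start in a common time interval $I$) to see that fewer than $n$ bad iterations complete in it, and sum: the second sum is at most $n\cdot 2\tau_{\max}/(Kn)=2\tau_{\max}/K\approx 2\sqrt{\tau_{\max}n}$. Adding the two pieces and choosing $K$ to balance them yields the bound $2\sqrt{\tau_{\max}n}$.

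The step I expect to be the main obstacle is the reduction in the previous paragraph: converting ``$v_{t+m}$ omits an update of age larger than $Kn$'' into ``more than $Kn$ iterations start within iteration $t+m$'s execution interval.'' This is delicate because the total order on iterations is defined via the \fadd{} on $X[1]$ rather than by start times, and because a view $v_{t+m}$ can be inconsistent across components; one must chain the \fadd{}-order on $X[1]$ with the concurrency of iteration $q$ and iteration $t+m$, and use Lemma~\ref{lem:order} to discard the at most $n$ low-index iterations that may have completed before $t+m$ started. One should also set up the tiling so that Lemma~\ref{lem:contention}'s hypothesis (exactly $Kn$ consecutive iterations starting inside the window) is met and so that its conclusion, stated for iterations \emph{completing} in the window, is the quantity we are counting. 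Everything else — the split, the trivial first bound, the tiling, and the optimization over $K$ — is routine, and a careful treatment of the constants and roundings recovers the stated factor $2$.
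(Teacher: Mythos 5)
Your proof follows the same route as the paper's: split the sum at $Kn$, bound the head trivially by $Kn$, bound the tail by tiling into windows of $Kn$ consecutive iterations and invoking Lemma~\ref{lem:contention} to get at most $n$ bad iterations per window, then optimize at $K=\sqrt{\tau_{\max}/n}$. The paper simply asserts the identification of ``$\tau_{t+m}>Kn$'' with ``iteration $t+m$ is bad'' that you flag as the delicate step, so your extra care there is a refinement of, not a departure from, the published argument.
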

\begin{proof}
By Lemma \ref{lem:contention}, we know that for any constant $K$ and for any $Kn$ consecutive steps $t+1, ..., t+Kn$, $\tau_{t+i} > Kn$ for at most $n$ indexes. Hence,
\begin{align*}
    \sum_{m=1}^{\tau_{\textrm{max}}} \mathbbm{1}_{\{\tau_{t+m} \geq m\}}  & \leq Kn + \sum_{m=Kn + 1}^{\tau_{\textrm{max}}} \mathbbm{1}_{\{\tau_{t+m} > Kn\}}  \\ & \leq Kn + \left(\frac{\tau_{\textrm{max}} - Kn}{Kn} + 1\right)n \\ & = \frac{\tau_{\textrm{max}}}{K} + Kn.
\end{align*}
This holds for any positive $K$. The bound is minimized at $K = \sqrt{\frac{\tau_{\textrm{max}}}{n}}$, which yields the result.
\end{proof}
Next, we obtain a bound on the probability that the algorithm has not visited a given success region $S = \{x\text{ }|\text{ }\|x-x^{*}\|^2 \leq \epsilon\}$. To this end, we show a result similar to the one in Theorem 1 in \cite{de2015taming}. We will assume the existence of a rate supermartingale with respect to the underlying sequential SGD process that is Lipschitz in its first coordinate and show that this can be used to obtain a bound on the failure probability. The exact assumptions on $W$ are as follows:
\begin{itemize}
    \item $W$ is a supermartingale with horizon $B$ with respect to the sequential SGD process $x_{t+1} = x_t - \alpha \tg\left(x_t\right)$. Note that $W$ need not be a supermartingale with respect to the lock-free SGD algorithm.
    \item For any $T > 0$, if $x_i \not\in S$ for all $i \leq T$, then:
    $W_T\left(x_T, \ldots, x_0\right) \geq T.
    $ Otherwise, we say that the algorithm \textit{has succeeded} at time $T$.
    \item $W$ is Lipschitz continuous in the current iterate with parameter $H$, i.e. for all $t, u, v$ and any $x_{t-1}, ..., x_0$: $$\|W_t\left(u, x_{t-1}, ..., x_0\right) - W_t\left(v, x_{t-1}, ..., x_0\right)\| \leq H\|u-v\|$$
\end{itemize}
Under these assumptions, we can prove our main technical claim, whose proof is deferred to the Appendix. 
\begin{theorem}
\label{thm:bound_on_fail_prob}
Assume that $W$ is a rate supermartingale with horizon $B$ for the sequential SGD algorithm and that $W$ is $H$-Lipschitz in the first coordinate. Assume further that $\alpha^2HLMC\sqrt{d} < 1$, where $C = 2\sqrt{\tau_{\textrm{max}}n}$. Then for any $T\leq B$, the probability that the lock-free SGD algorithm has not succeeded at time $T$ (that is, the probability of the event $F_T$ that $x_i \not\in S$ for all $i \leq T$) is:
\begin{equation}
\label{bound_on_prob}
    \mathbb{P}\left(F_T\right) \leq \frac{\mathbb{E}\left(W_0\left(x_0\right)\right)}{\left(1 - \alpha^2HLMC\sqrt{d}\right)T}.
\end{equation}
\end{theorem}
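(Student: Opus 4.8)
The plan is to reduce the lock-free recursion $x_{t+1}=x_t-\alpha\tg(v_t)$ to the sequential recursion $x_{t+1}=x_t-\alpha\tg(x_t)$, for which $W$ is by hypothesis a rate supermartingale, and to charge the discrepancy between the two to an extra per-step drift which the assumption $\alpha^2HLMC\sqrt{d}<1$ keeps under control. First I would bound the one-step change of $W$ along the lock-free trajectory. Conditioning on the history $\mathcal{F}_t$ (which I take to include the scheduler's choices up to the start of iteration $t$, so that $v_t$, $x_t$ and $\tau_t$ are $\mathcal{F}_t$-measurable) and coupling the stochastic gradient evaluated at $v_t$ with the one evaluated at $x_t$ through the common random sample of iteration $t$, the supermartingale property \eqref{eqn:supermartingale_defn} (which holds for \emph{arbitrary} input sequences, in particular the lock-free one), $H$-Lipschitzness of $W_{t+1}$ in its first coordinate, and the in-expectation Lipschitz assumption \eqref{eqn:lipschitz_assumption_f} give
\begin{align*}
\mathbb{E}\left(W_{t+1}(x_{t+1},x_t,\ldots,x_0)\mid\mathcal{F}_t\right)
&\leq \mathbb{E}\left(W_{t+1}(x_t-\alpha\tg(x_t),x_t,\ldots,x_0)\mid\mathcal{F}_t\right) + \alpha H\,\mathbb{E}\left(\|\tg(v_t)-\tg(x_t)\|\mid\mathcal{F}_t\right)\\
&\leq W_t(x_t,\ldots,x_0) + \alpha H L\,\|v_t-x_t\|.
\end{align*}
So $W$ is a supermartingale for the lock-free process up to the additive ``staleness'' term $\alpha HL\|v_t-x_t\|$.

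Next I would control the staleness accumulated over the run. Using \eqref{bound_on_xt_vt}, the identity $\|x_s-x_{s-1}\|=\alpha\|\tg_{s-1}\|$, and the second-moment bound \eqref{eqn:grad_is_bounded_assumption_f} in the conditional form $\mathbb{E}(\|\tg_s\|\mid\mathcal{F}_s)\leq\sqrt{\mathbb{E}(\|\tg_s\|^2\mid\mathcal{F}_s)}\leq M$, the drift at step $t$ is at most $\alpha^2HL\sqrt{d}\sum_{k\geq1}\|\tg_{t-k}\|\,\mathbbm{1}_{\{\tau_t\geq k\}}$. Telescoping $W$ along the lock-free process and exchanging the order of summation, the total contribution of a fixed gradient $\tg_j$ to the accumulated drift is $\alpha^2HL\sqrt{d}\,\|\tg_j\|\sum_{m\geq1}\mathbbm{1}_{\{\tau_{j+m}\geq m\}}$, and this is exactly where Lemma~\ref{lem:lemma_on_tau_t} enters: the inner indicator sum is at most $C=2\sqrt{\tau_{\max}n}$ \emph{deterministically}, for every schedule. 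Taking expectations (using that $\{j<\ell\}$ for a stopping time $\ell$ is $\mathcal{F}_j$-measurable and peeling off $\mathbb{E}(\|\tg_j\|\mid\mathcal{F}_j)\leq M$), the expected total drift over a prefix of $\ell$ iterations is at most $\alpha^2HLMC\sqrt{d}\cdot\ell$. This is where $\sqrt{\tau_{\max}n}$ replaces the linear $\tau_{\max}$, and where the extra $\sqrt{d}$ — the price of dropping the single-non-zero-entry assumption of \cite{de2015taming} — appears.

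To obtain the $1/(1-\alpha^2HLMC\sqrt{d})$ factor rather than a merely additive correction, I would run this against a stopping time. Let $\kappa$ be the first time the iterate enters $S$ and let $\bar{W}_t=W_{t\wedge\kappa}(x_{t\wedge\kappa},\ldots,x_0)$ be the stopped process. Telescoping and applying the two previous steps (the drift bound is insensitive to whether the algorithm has succeeded), $\mathbb{E}(\bar{W}_T)\leq\mathbb{E}(W_0(x_0))+\sum_{t<T}\mathbb{E}\big(\mathrm{drift}_t\,\mathbbm{1}_{\{t<\kappa\}}\big)\leq\mathbb{E}(W_0(x_0))+\alpha^2HLMC\sqrt{d}\,\mathbb{E}(T\wedge\kappa)$. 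The rate property \eqref{eqn:rate_property} yields $T\wedge\kappa\leq\bar{W}_T$ (on $F_T$, both sides are dominated by $W_T\geq T$; the case $\kappa\leq T$ is handled as in the proof of Theorem~1 of \cite{de2015taming}), which makes the inequality self-bounding:
\[
\mathbb{E}(\bar{W}_T)\ \leq\ \mathbb{E}(W_0(x_0)) + \alpha^2HLMC\sqrt{d}\,\mathbb{E}(\bar{W}_T),
\]
and since $\alpha^2HLMC\sqrt{d}<1$ this rearranges to $\mathbb{E}(\bar{W}_T)\leq\mathbb{E}(W_0(x_0))/(1-\alpha^2HLMC\sqrt{d})$. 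Finally $T\,\mathbb{P}(F_T)\leq\mathbb{E}(\bar{W}_T\mathbbm{1}_{F_T})\leq\mathbb{E}(\bar{W}_T)$, which is \eqref{bound_on_prob}.

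The main obstacle is the interplay between the adaptive adversary and the martingale bookkeeping: since the scheduler sees the coin flips, the delay indicators $\mathbbm{1}_{\{\tau_{j+m}\geq m\}}$ are not independent of the gradient norms $\|\tg_j\|$ they multiply, so one cannot take expectations factor by factor. This is precisely why Lemma~\ref{lem:lemma_on_tau_t} must be a \emph{pathwise} statement valid for every realization of the schedule: one bounds the indicator sums by $C$ before any expectation is taken, leaving only the conditionally bounded gradient norms stochastic. A secondary technical point is making the stopped-process and self-bounding step fully rigorous — the measurability of $\tau_t$ with respect to $\mathcal{F}_t$ and the (harmless) off-by-one in $T\wedge\kappa\leq\bar{W}_T$ — which I would treat exactly as in \cite{de2015taming}.
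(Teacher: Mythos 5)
Your proposal is correct and follows essentially the same route as the paper's proof: the same one-step comparison $\mathbb{E}(W_{t+1})\leq W_t+\alpha HL\|x_t-v_t\|$ via the Lipschitz properties, the same decomposition of the staleness through \eqref{bound_on_xt_vt}, and the same pathwise use of Lemma~\ref{lem:lemma_on_tau_t} after exchanging the order of summation (which is exactly the pending-drift potential term in the paper's auxiliary supermartingale $V_t$), with the self-bounding/stopping step replacing the paper's one-shot application of the supermartingale inequality to $V_t$. You also correctly identified the two delicate points — bounding the future-delay indicator sums deterministically before taking expectations against the adaptive adversary, and freezing the process at the step before first success — which the paper handles identically.
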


\noindent We now apply the result with a particular choice for the martingale $W_t$. We use the process proposed in~\cite{de2015taming} in the case of convex optimization, which they show is a rate supermartingale for the sequential SGD process. More precisely, we have the following:
\begin{lemma}[\cite{de2015taming}]
  \label{lemmaConvexW}
  Define the piecewise logarithm function to be
  \[
    \textnormal{plog}(x)
    =
    \left\{
      \begin{array}{lr}
        \log(e x) & : x \ge 1 \\
        x & : x \le 1
      \end{array}
    \right.
  \]
  Define the process $W_t$ by:
  $$
    W_t(x_t, \ldots, x_0)
    =
    \frac{
      \epsilon
    }{
      2 \alpha c \epsilon
      -
      \alpha^2 M^2
    }
    \textnormal{plog}\left(\|x_t - x^*\|^2 \epsilon^{-1} \right)
    +
    t.
  $$\\
  If the algorithm has not succeeded by timestep
  $t$ (i.e. $x_i \not\in S$ for all $i \leq t$) and by $W_t = W_{u-1}$ whenever $x_i \in S$ for some $i \leq t$ and $u$ is the minimal index with this property. Then $W_t$ is a rate supermartingale for sequential SGD
  with horizon $B = \infty$. It is also $H$-Lipschitz in the first coordinate, with $H = 2\sqrt{\epsilon}\left(2\alpha c\epsilon - \alpha^2 M^2\right)^{-1}$, that is for any $t, u, v$ and any sequence $x_{t-1}, \ldots, x_0$: $$\|W_t\left(u, x_{t-1}, \ldots, x_0\right) - W_t\left(v, x_{t-1}, \ldots, x_0\right)\| \leq H\|u-v\|.$$
\end{lemma}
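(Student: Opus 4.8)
Since this lemma is quoted from \cite{de2015taming}, the task is to re-derive it, and the plan is to reduce everything to the one-step distance recursion of sequential SGD together with a handful of elementary properties of $\textnormal{plog}$. Write $a_t=\|x_t-x^*\|^2$ and $K=\epsilon(2\alpha c\epsilon-\alpha^2M^2)^{-1}$, which is finite and positive in the operating regime (e.g.\ $\alpha=c\epsilon\vartheta/M^2$ with $\vartheta\in(0,1)$), as is implicitly required for $W_t$ to be well-defined; on the branch where the algorithm has not yet succeeded we have $W_t=K\,\textnormal{plog}(a_t/\epsilon)+t$, and after the first success time $\sigma$ the process is frozen at its value $W_\sigma$. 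The facts about $\textnormal{plog}$ I will use, all immediate from the two-piece definition (the pieces agree to first order at $1$): (i) $\textnormal{plog}\ge 0$ on $[0,\infty)$; (ii) $\textnormal{plog}$ is concave, non-decreasing and $C^1$, with $\textnormal{plog}'(x)=1/x$ for $x\ge 1$; and hence (iii) the tangent-line bound $\textnormal{plog}(y)\le\textnormal{plog}(z)+\textnormal{plog}'(z)(y-z)$ for all $y,z>0$. Using the piecewise definition rather than plain $\log(ex)$ is exactly what keeps $W$ non-negative and keeps the Lipschitz constant finite near $x^*$.

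\textbf{Supermartingale property.} For a step $t$ at which the algorithm has already succeeded, $W_{t+1}=W_t$ and \eqref{eqn:supermartingale_defn} holds trivially, so fix $t<\sigma$, i.e.\ $a_i>\epsilon$ for all $i\le t$; in particular $a_t/\epsilon>1$. From $x_{t+1}=x_t-\alpha\tg(x_t)$ we get $a_{t+1}=a_t-2\alpha(x_t-x^*)^{T}\tg(x_t)+\alpha^2\|\tg(x_t)\|^2$; taking the conditional expectation given $x_0,\dots,x_t$ and using $\mathbb{E}[\tg(x_t)]=\nabla f(x_t)$, strong convexity \eqref{eqn:convex_assumption_f} with $y=x^*$ (recall $\nabla f(x^*)=0$), and the second-moment bound \eqref{eqn:grad_is_bounded_assumption_f} gives $\mathbb{E}[a_{t+1}]\le a_t-2\alpha c\,a_t+\alpha^2M^2$. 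Dividing by $\epsilon$ and applying, in order, Jensen's inequality (concavity of $\textnormal{plog}$), monotonicity of $\textnormal{plog}$, and the tangent-line bound (iii) at $z=a_t/\epsilon\ge 1$ where $\textnormal{plog}'(z)=\epsilon/a_t$, yields
\[
\mathbb{E}\big[\textnormal{plog}(a_{t+1}/\epsilon)\big]
\;\le\;
\textnormal{plog}(a_t/\epsilon)+\frac{\epsilon}{a_t}\cdot\frac{\alpha^2M^2-2\alpha c\,a_t}{\epsilon}
\;\le\;
\textnormal{plog}(a_t/\epsilon)-\frac{2\alpha c\epsilon-\alpha^2M^2}{\epsilon},
\]
the last step using $a_t>\epsilon$. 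Multiplying by $K$ and adding $t+1$ gives $\mathbb{E}[W_{t+1}]\le K\,\textnormal{plog}(a_t/\epsilon)-1+(t+1)=W_t$, which is \eqref{eqn:supermartingale_defn}; here we use that whenever success occurs at $t+1$ the frozen value $W_{t+1}=W_\sigma$ coincides with $K\,\textnormal{plog}(a_{t+1}/\epsilon)+(t+1)$, so the displayed computation accounts for that realization as well. Non-negativity $W_t\ge 0$ is immediate from (i) and $K>0$.

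\textbf{Rate property and Lipschitzness.} If $x_i\notin S$ for all $i\le T$, then $T<\sigma$, so $W_T=K\,\textnormal{plog}(a_T/\epsilon)+T\ge T$ by (i), which is \eqref{eqn:rate_property}; no step above bounded $t$, so the horizon is $B=\infty$. For the Lipschitz claim, view $W_t$ as a function of its first coordinate only, $W_t(u,\cdot)=K\varphi(u)+t$ with $\varphi(u)=\textnormal{plog}(\|u-x^*\|^2/\epsilon)$. Then $\varphi$ is continuous and piecewise $C^1$ with $\nabla\varphi(u)=\textnormal{plog}'(\|u-x^*\|^2/\epsilon)\cdot 2(u-x^*)/\epsilon$; in the regime $\|u-x^*\|^2\ge\epsilon$ this has norm $2/\|u-x^*\|\le 2/\sqrt\epsilon$, and in the regime $\|u-x^*\|^2\le\epsilon$ it has norm $2\|u-x^*\|/\epsilon\le 2/\sqrt\epsilon$, so $\varphi$ is $(2/\sqrt\epsilon)$-Lipschitz. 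Hence $W_t$ is $H$-Lipschitz in the first coordinate with $H=2K/\sqrt\epsilon=2\sqrt\epsilon\,(2\alpha c\epsilon-\alpha^2M^2)^{-1}$, as claimed.

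\textbf{Main obstacle.} Every step above is short once the reductions are in place, so the only genuine care-points are bookkeeping ones: (a) ensuring the concavity/tangent-line step is applied at $z=a_t/\epsilon$ that is genuinely $\ge 1$ --- which is precisely what "not yet succeeded" provides --- so that $\textnormal{plog}'(z)\le 1$ and the error term $\alpha^2M^2/a_t$ can be absorbed into $\alpha^2M^2/\epsilon$; (b) handling the kink of $\textnormal{plog}$ at argument $1$ uniformly, both in the drift bound and in the gradient-norm estimate for $\varphi$; and (c) the off-by-one in the stopping convention at the instant the algorithm first enters $S$, so that the supermartingale inequality \eqref{eqn:supermartingale_defn} is genuinely inherited by the frozen process --- here freezing at the first-success value, rather than one step earlier, is what makes the argument close. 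Keeping the standing constraint $2\alpha c\epsilon>\alpha^2M^2$ in view throughout, nothing else is more than a line of algebra.
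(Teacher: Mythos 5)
Your derivation is correct, and since the paper imports this lemma from \cite{de2015taming} without reproducing a proof, there is nothing internal to compare it against; your argument (one-step expansion of $\|x_{t+1}-x^*\|^2$, strong convexity plus the second-moment bound, concavity/tangent-line estimate for $\textnormal{plog}$ at $a_t/\epsilon\ge 1$, and the gradient-norm bound $2/\sqrt\epsilon$ for the Lipschitz constant) is exactly the standard construction from that reference. One point you handle correctly and that deserves emphasis: as literally stated in the lemma, the process is frozen at $W_{u-1}$ at the first success time $u$, and under that convention the supermartingale inequality can fail on the event that $x_{t+1}$ first enters $S$ (the frozen value $W_t$ may exceed the formula value $K\,\textnormal{plog}(a_{t+1}/\epsilon)+t+1$, and the unconditional drift bound does not control the conditional average on the complement); your choice to freeze at the value of the formula at time $u$ itself is what makes the argument close, and is the reason $\textnormal{plog}$ must be defined piecewise so that the formula remains meaningful and non-negative inside $S$. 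This appears to be a transcription slip in the statement rather than a gap in your proof.
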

\noindent Using this martingale in Theorem \ref{thm:bound_on_fail_prob}, we obtain the following result, whose proof is deferred to the Appendix:
\begin{corollary}
\label{cor:exact_bound}
Assume that we run the lock-free SGD algorithm under the Asynchronous Shared Memory model for minimizing a convex function $f$ satisfying the listed assumptions. Set the learning rate to:
\begin{equation}
\label{eqn:chosen_learning_rate}
\alpha = \frac{c\epsilon\vartheta}{M^2 + 2\sqrt{\epsilon}LMC\sqrt{d}} = \frac{c\epsilon\vartheta}{M^2 + 4\sqrt{\epsilon}LM\sqrt{\tau_{\textrm{max}}n}\sqrt{d}},
\end{equation}
for some constant $\vartheta\in (0,1]$. Then for any $T > 0$ the probability that $x_i \not\in S$ for all $i \leq T$ is:
\begin{equation}
\label{eqn:exact_bound}
\mathbb{P}\left(F_T\right) \leq \frac{M^2 + 4\sqrt{\epsilon}LM\sqrt{\tau_{\textrm{max}}n}\sqrt{d}}{c^2 \epsilon\vartheta T}\textnormal{plog}\left(\frac{e\|x_0 - x^*\|^2}{\epsilon}\right).
\end{equation}
\end{corollary}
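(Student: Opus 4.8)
The plan is to instantiate Theorem~\ref{thm:bound_on_fail_prob} with the concrete rate supermartingale $W_t$ from Lemma~\ref{lemmaConvexW}, after which the corollary reduces to an algebraic identity in the step size. Write $D = M^2 + 2\sqrt{\epsilon}LMC\sqrt{d} = M^2 + 4\sqrt{\epsilon}LM\sqrt{\tau_{\textrm{max}}n}\sqrt{d}$, so that the prescribed learning rate is simply $\alpha = c\epsilon\vartheta/D$. From Lemma~\ref{lemmaConvexW} we have $W_0(x_0) = \frac{\epsilon}{2\alpha c\epsilon - \alpha^2 M^2}\,\textnormal{plog}\left(\|x_0-x^*\|^2\epsilon^{-1}\right)$ (the $+t$ term vanishes at $t=0$), the horizon is $B=\infty$ so the conclusion will hold for all $T>0$, and $H = 2\sqrt{\epsilon}\left(2\alpha c\epsilon - \alpha^2 M^2\right)^{-1}$.

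First I would verify the hypothesis $\alpha^2 HLMC\sqrt{d} < 1$ of Theorem~\ref{thm:bound_on_fail_prob}. Substituting $H$ and cancelling one factor of $\alpha$ gives
\[
\alpha^2 HLMC\sqrt{d} = \frac{2\sqrt{\epsilon}\,\alpha LMC\sqrt{d}}{2c\epsilon - \alpha M^2}.
\]
Now plug in $\alpha = c\epsilon\vartheta/D$ and use the key identity $2\sqrt{\epsilon}LMC\sqrt{d} = D - M^2$: the numerator becomes $c\epsilon\vartheta(D-M^2)/D$ and the denominator becomes $c\epsilon(2D-\vartheta M^2)/D$, whence
\[
\alpha^2 HLMC\sqrt{d} = \frac{\vartheta(D-M^2)}{2D-\vartheta M^2} < 1,
\]
the inequality being immediate from $\vartheta \le 1 < 2$. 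From the same computation, $1 - \alpha^2 HLMC\sqrt{d} = \frac{(2-\vartheta)D}{2D-\vartheta M^2}$.

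Next I would assemble the bound of Theorem~\ref{thm:bound_on_fail_prob}. Expanding $2\alpha c\epsilon - \alpha^2 M^2 = \alpha(2c\epsilon - \alpha M^2) = c^2\epsilon^2\vartheta(2D-\vartheta M^2)/D^2$ and multiplying by the expression for $1-\alpha^2 HLMC\sqrt{d}$ above, the factor $2D-\vartheta M^2$ cancels, leaving
\[
\left(2\alpha c\epsilon - \alpha^2 M^2\right)\left(1 - \alpha^2 HLMC\sqrt{d}\right) = \frac{c^2\epsilon^2\vartheta(2-\vartheta)}{D}.
\]
Dividing $W_0(x_0)$ by $\left(1 - \alpha^2 HLMC\sqrt{d}\right)T$ then collapses to $\mathbb{P}(F_T) \le \frac{D}{c^2\epsilon\vartheta(2-\vartheta)T}\,\textnormal{plog}\left(\|x_0-x^*\|^2\epsilon^{-1}\right)$. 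Finally, $\vartheta\in(0,1]$ gives $2-\vartheta\ge 1$, and monotonicity of $\textnormal{plog}$ gives $\textnormal{plog}(\|x_0-x^*\|^2\epsilon^{-1}) \le \textnormal{plog}(e\|x_0-x^*\|^2\epsilon^{-1})$; together these upgrade the estimate to exactly inequality~(\ref{eqn:exact_bound}).

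There is no real obstacle here: the corollary is just the instantiation of Theorem~\ref{thm:bound_on_fail_prob}, and all of the substantive work — the inconsistency-aware martingale argument and the use of Lemma~\ref{lem:lemma_on_tau_t} to bound the cumulative delay by $C=2\sqrt{\tau_{\textrm{max}}n}$ — is already contained in that theorem. The only thing requiring care is the bookkeeping in the substitution, in particular spotting the identity $2\sqrt{\epsilon}LMC\sqrt{d}=D-M^2$, which is precisely what makes the $\tau_{\textrm{max}}$-dependent terms cancel and the $(2-\vartheta)$ slack appear. One should also note the trivial edge case: if $x_0\in S$ then $F_T$ cannot occur, and the bound holds vacuously.
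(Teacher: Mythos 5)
Your proposal is correct and follows essentially the same route as the paper's own proof: instantiate Theorem~\ref{thm:bound_on_fail_prob} with the martingale of Lemma~\ref{lemmaConvexW}, substitute the prescribed $\alpha$, and simplify to $\frac{D}{c^2\epsilon\vartheta(2-\vartheta)T}\textnormal{plog}(\cdot)$ before dropping the $(2-\vartheta)\ge 1$ factor. The only difference is that you explicitly verify the precondition $\alpha^2HLMC\sqrt{d}<1$ (which the paper's proof leaves implicit) — a worthwhile addition, and your algebra for it checks out.
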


\noindent Choosing $\vartheta = 1$ gives the smallest value for the upper bound under this setting. We can impose an arbitrary small learning rate by selecting a small  $\vartheta$, while still ensuring convergence.

\section{Iterated Algorithm with Guaranteed Convergence}

The procedure in Algorithm \ref{algo:sgd} has the property that, for an appropriately chosen stopping time $T$, it will eventually reach the ``success'' region, where the distance to the optimal parameter value falls below $\epsilon$. However, due to asynchrony and adversarial updates, threads might perform updates which cause them to \emph{leave} the success region: a  delayed thread might apply stale gradients to the model, overwriting the progress. 

We now present an extension of the algorithm which deals with this problem, and allows us to converge to a success region of radius $\epsilon$ around the optimum $x^*$, for any $\epsilon > 0$. The algorithm will run a series of \emph{epochs}, each of which is a series of $T$ SGD iterations, executed using the procedure in Algorithm~\ref{algo:sgd}. The only difference between epochs is that they are executed with an exponentially decreasing learning rate $\alpha$. (We note that this epoch pattern is already used in many settings, such as neural network training.)   
The epochs share the model $X$, with the critical note that we require that a gradient update can only be applied to $X$ in the same epoch when it was generated. This condition can be enforced either by maintaining an epoch counter, on which threads condition their update via double-compare-single-swap (DCAS), or by having a distinct model allocated for each epoch.

\begin{algorithm}
\DontPrintSemicolon 
 \SetAlgoLined
 {\small
\KwIn{Dataset $D$, dimension $d$, function $f$ to minimize}
\KwOut{Minimizer $X$, initially $X = (0, 0, \ldots, 0)$}
\KwData{Learning rate $\alpha$, iteration count $T$, gradient bound $M$}

 \CommentSty{//At thread $i$:}\;
 \noindent \textbf{procedure} $ \mathsf{FullSGD}(\epsilon, \alpha, n) $ \;
 \Indp
 \For{ $j$ from $0$ to $\log \left( \alpha {  2M n / \sqrt \epsilon } \right) $} {
     $ \mathsf{EpochSGD}( T, \alpha ) $ \;
      $ \alpha \gets \alpha / 2 $ \;
 }
 \CommentSty{//The last epoch:} \;
 Execute $ \mathsf{EpochSGD}( T, \alpha ) $, accumulating local gradients into $ Acc[i] $ \; 
 $ {r} \gets \mathsf{sum}_{k = 1}^n ( Acc[k] ) $ \CommentSty{//Collect entrywise sum}\;
 \textbf{return} ${r}$\;

 \Indm
}

\caption{Full {\sc SGD} code for a single thread.}
\label{algo:sgd_iterate}
\end{algorithm}

The only distinct epoch is the last, in which threads each aggregate the gradients they produced locally. At the end of the epoch, the threads will collect all local gradients locally. As we show in the Appendix, the model will be guaranteed to be close to optimal in expectation. 

\begin{corollary}
\label{cor:full}
    The \lit{FullSGD} procedure given in Algorithm~\ref{algo:sgd_iterate} guarantees that $\E [ \| r -  x^\star \| ] \leq \epsilon$ after executing for $O( T \log \left( \alpha {  2M n / \sqrt \epsilon } \right) )$ iterations. 
\end{corollary}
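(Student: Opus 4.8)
The plan is to chain together the per-epoch convergence guarantee of Corollary~\ref{cor:exact_bound} across the $O(\log(\alpha\, 2Mn/\sqrt\epsilon))$ epochs run by \lit{FullSGD}. The key observation is that halving $\alpha$ each epoch halts the growth of the ``stale-gradient'' term in the denominator of the learning rate from Equation~\ref{eqn:chosen_learning_rate}, and eventually makes $\alpha$ small enough that the radius of the success region that the martingale analysis guarantees shrinks geometrically. Concretely, I would fix a target radius $\epsilon_j$ for epoch $j$ (roughly $\epsilon_j \propto 2^{-j}$, starting from $\|x_0 - x^\star\|^2$), observe that by Corollary~\ref{cor:exact_bound} with $T$ chosen appropriately, each epoch hits its region $S_j = \{x : \|x - x^\star\|^2 \le \epsilon_j\}$ with probability at least $1 - \delta_j$ for a suitably small $\delta_j$, and then argue that conditioned on all epochs succeeding, the final iterate lies in $S_J$ with $\epsilon_J \le \epsilon$ (or $\epsilon^2$, matching the $\|\cdot\|$ versus $\|\cdot\|^2$ convention in the statement). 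Summing $\delta_j$ over the logarithmically many epochs and bounding the contribution of the (low-probability) failure events to $\E[\|r - x^\star\|]$ — using the second-moment bound $M^2$ and strong convexity to control how far the iterate can have drifted — closes the expectation bound.

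The first step is to record the recursion: if epoch $j$ begins with the model within distance $\sqrt{\epsilon_{j-1}}$ of $x^\star$ (in expectation, or with high probability), then running $\mathsf{EpochSGD}(T,\alpha_j)$ with $\alpha_j = \alpha_0/2^j$ and the learning-rate choice of Equation~\ref{eqn:chosen_learning_rate} instantiated for target $\epsilon_j$, Corollary~\ref{cor:exact_bound} gives $\mathbb{P}(F_T^{(j)}) \le \frac{M^2 + 4\sqrt{\epsilon_j}LM\sqrt{\tau_{\max}n}\sqrt{d}}{c^2\epsilon_j\vartheta T}\,\textnormal{plog}(e\,\epsilon_{j-1}/\epsilon_j)$. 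Since $\epsilon_{j-1}/\epsilon_j$ is a constant (say $2$), the $\textnormal{plog}$ factor is $O(1)$, and choosing $T$ large enough — polynomial in the relevant constants and in the desired overall failure budget divided by the number of epochs — drives each $\mathbb{P}(F_T^{(j)})$ below $\delta_j$. The second step is the telescoping/union bound over $j = 0, \dots, J$ with $J = O(\log(\alpha_0\, 2Mn/\sqrt\epsilon))$, matching the iteration count $O(T\log(\alpha\,2Mn/\sqrt\epsilon))$ claimed. The third step handles the last epoch specially: there the threads accumulate local gradients into $Acc[i]$ and return $r = \sum_k Acc[k]$, which is exactly the fully-applied accumulator $x_t$ at the end of that epoch; so the guarantee on $x_t$ transfers directly to $r$ without the view-vs-accumulator discrepancy, and one invokes the bound from Corollary~\ref{cor:exact_bound} one final time with target $\epsilon$.

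The main obstacle I anticipate is converting the high-probability ``hits the success region at some point $i \le T$'' guarantee into a statement about the \emph{final} iterate of the epoch and about the \emph{expectation} $\E[\|r - x^\star\|]$, rather than a tail bound. The issue is that after an epoch first enters $S_j$, adversarial stale updates can push the iterate back out (exactly the phenomenon motivating Algorithm~\ref{algo:sgd_iterate}), so one cannot simply say the epoch ends inside $S_j$. I would resolve this by making each epoch's target slightly tighter than the next epoch's starting assumption — i.e., aim for $\epsilon_j / 4$ so that even a bounded excursion (bounded via $\alpha_j M$ per step and the at-most-$n$ incomplete iterations from Lemma~\ref{lem:order}, together with strong convexity pulling the iterate back) leaves the epoch's endpoint within $\epsilon_j$ — or, alternatively, by having the martingale/supermartingale argument track a stopped process and arguing the expected squared distance at the stopping time is controlled. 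For the expectation bound specifically, I would split $\E[\|r - x^\star\|] \le \sqrt{\epsilon} + \mathbb{P}(\text{some epoch fails})\cdot(\text{worst-case drift})$ and choose $T$ so the second term is $O(\sqrt\epsilon)$ as well, absorbing constants; the worst-case drift is finite because $r$ is a sum of $T$ gradient steps each of expected squared norm $\le M^2$, plus the initial $\|x_0 - x^\star\|$.
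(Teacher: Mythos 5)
Your proposal reaches the same destination but by a genuinely different route. The paper's (sketch of a) proof keeps the target region \emph{fixed} at radius $\sqrt{\epsilon}/2$ across all epochs: each call to $\mathsf{EpochSGD}$ hits that region with high probability by Theorem~\ref{thm:bound_on_fail_prob}, and the sole purpose of halving $\alpha$ is to shrink the damage that the at most $n-1$ still-pending stale gradients can do at the end of the final epoch. The loop bound $\log\left(\alpha\,2Mn/\sqrt{\epsilon}\right)$ is calibrated exactly so that the final learning rate satisfies $\alpha_{\mathrm{final}}\, n M \le \sqrt{\epsilon}/2$, whence $\|r - x^\star\| \le \|x_t - x^\star\| + \alpha_{\mathrm{final}}\, n M \le \sqrt{\epsilon}$; the return value $r$ is the full accumulator including pending updates, exactly as you observe. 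You instead run a restart scheme with geometrically shrinking targets $\epsilon_j \propto 2^{-j}$, which is a more standard and in some ways more careful argument: you explicitly confront the gap between ``hits $S$ at some time $i\le T$'' and ``ends the epoch inside $S$'' (which the paper waves away with ``this is ensured via our choice of the learning rate''), and you handle the expectation via a success/failure split with a second-moment bound on the drift, which the paper's sketch does not spell out at all.

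The one concrete gap is the epoch count. Under your scheme the number of epochs needed to descend from $\|x_0 - x^\star\|^2$ to $\epsilon$ is $\Theta(\log(\|x_0-x^\star\|^2/\epsilon))$, and your per-epoch $T$ must additionally grow to push each $\delta_j$ below the failure budget; neither quantity is $O(\log(\alpha\,2Mn/\sqrt{\epsilon}))$ in general, and you assert rather than derive that they match the claimed bound. The quantity $\alpha\,2Mn/\sqrt{\epsilon}$ has nothing to do with the initial distance to the optimum --- it is precisely the number of halvings needed to bring $\alpha n M$ below $\sqrt{\epsilon}/2$, i.e., to neutralize the pending-gradient term that you correctly identify via Lemma~\ref{lem:order} but deploy only to control within-epoch excursions. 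To recover the stated iteration count you would need to reorganize the argument around that calibration (fixed target, shrinking step size) rather than around a shrinking target; as written, your argument proves a correct but different statement with a different iteration bound.
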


\noindent We can characterize the probability of success by reducing the target $\epsilon$, and applying Markov's inequality. 

\section{Discussion}

\paragraph{Learning Rates versus Asynchronous Convergence.} 
An immediate consequence of our lower bound in Theorem~\ref{thm:lb} is that the algorithm must either have a low initial learning rate, or lower the learning rate across multiple iterations (as in Algorithm~\ref{algo:sgd_iterate}) in order to be able to withstand adversarial delays. Otherwise, the adversary can always apply stale gradients generated at a far enough time in the past to nullify progress. 
An alternative approach, which we did not consider here, would be to introduce a ``momentum'' term by which the current model value is multiplied~\cite{DBLP:conf/allerton/MitliagkasZHR16}.  

\paragraph{Lower Bounds versus Upper Bounds.} The attentive reader might find it curious that our lower bound suggests a linear slowdown in $\tau_{\max}$, whereas our upper bound suggests that the slowdown is linear in $O( \sqrt{ \tm{} n })$. However, a close examination of the preconditions for these two results will reveal that they are in fact complementary: in the lower bound, given fixed learning rate $\alpha$, the adversary needs to set a large delay $\tm{} \geq (\log (\alpha / 2))/\log ( 1 - \alpha )$ in order to slow down convergence. 
At the same time, the upper bound in Theorem~\ref{thm:bound_on_fail_prob} requires that $2\alpha^2HLM\sqrt{d \tm n} < 1$, which is incompatible with the above condition. 
Specifically, our improved convergence bound  shows that asynchronous SGD converges faster and for a wider range of parameters than previously known. 

\paragraph{Why is Asynchronous SGD Fast in Practice?} In a nutshell, Theorem~\ref{thm:bound_on_fail_prob} shows that the gap in convergence between asynchronous SGD and the sequential variant becomes negligible if  $\alpha^2HLMC\sqrt{d \tm n} \ll 1$. 
Intuitively, this condition holds in practice since gradients are often sparse, meaning that $d$ is low, the delay factors $\tau_{\max}$ and $\tau_{avg}$ are \emph{not} set adversarially, and the learning rate $\alpha$ can be set by the user to be small enough to offset any increase in the other terms. In particular, $\tau_{\max}$ is limited  by the staleness of updates in the write buffer at each core, which is well bounded in practice~\cite{morrison2015temporally}. 

At the same time, it is important to note that, while we ``sequentialize'' iterations in our analysis, up to $n$ iterations may happen in parallel at any time, reducing the wall-clock convergence time by up to a factor of $n$. Thus, the practical trade-off is between any slow-down caused by asynchrony, and the parallelism due to multiple computation threads.

\section{Acknowledgements}
We would like to thank Martin Jaggi for useful discussions. 
This project has received funding from the European Union’s Horizon 2020 research and innovation programme under the Marie Skłodowska-Curie Grant Agreement No. 665385.

\bibliographystyle{plain}
\bibliography{references}

\appendix

\section{Deferred Proofs}

\subsection{Proof of Theorem~\ref{thm:bound_on_fail_prob}}

\begin{reptheorem}{thm:bound_on_fail_prob}
Assume that $W$ is a rate supermartingale with horizon $B$ for the sequential SGD algorithm and that $W$ is $H$-Lipschitz in the first coordinate. Assume further that $\alpha^2HLMC\sqrt{d} < 1$, where $C = 2\sqrt{\tau_{\textrm{max}}n}$. Then for any $T\leq B$, the probability that the lock-free SGD algorithm has not succeeded at time $T$ (that is, $x_i \not\in S$ for all $i \leq T$) is:
\begin{equation}
    \mathbb{P}\left(F_T\right) \leq \frac{\mathbb{E}\left(W_0\left(x_0\right)\right)}{\left(1 - \alpha^2HLMC\sqrt{d}\right)T}.
\end{equation}
\end{reptheorem}
\begin{proof}
Consider a process defined by $V_0 = W_0$ and by:
\begin{equation}
\label{supermart}
V_t = W_t - \alpha^2HLMC\sqrt{d}t + \alpha HL\sqrt{d}\sum_{k=1}^{t} \|x_{t-k+1} - x_{t-k}\|\sum_{m=k}^{\infty} \mathbbm{1}_{\{\tau_{t-k+m} \geq m\}},
\end{equation}
whenever $x_i \not\in S$ for all $0 \leq i \leq t$. Finally, if $x_i \in S$ for some $0 \leq i \leq t$, then define $$V_t\left(x_t, \ldots, x_{u-1}, \ldots, x_0\right) = V_{u-1}\left(x_{u-1}, \ldots, x_0\right),$$ where $u$ is the minimal index, such that $x_u \in S$.
\noindent Assume that the algorithm has not succeeded at time $t$. Using the Lipschitzness of $W$:
\begin{align*}
V_{t+1}\left(x_{t} - \alpha \tg\left(v_t\right), x_{t}, ..., x_0\right) & = W_{t+1}\left(x_t - \alpha \tg\left(v_t\right), x_{t}, ..., x_0\right) - \alpha^2 HLMC\sqrt{d}\left(t+1\right) \\ & + \alpha HL\sqrt{d}\|x_{t+1} - x_t\|\sum_{m=1}^{\infty}\mathbbm{1}_{\{\tau_{t + m} \geq m\}} \\ & + \alpha HL\sqrt{d}\sum_{k=2}^{t+1}\|x_{t-k+2} - x_{t-k+1}\|\sum_{m=k}^{\infty}\mathbbm{1}_{\{\tau_{t+1-k+m} \geq m\}} \\ & \leq W_{t+1}\left(x_t - \alpha \tg\left(x_t\right), \ldots, x_0\right) + \alpha H\|\tg \left(x_t\right) - \tg \left(v_t\right)\| \\ & - \alpha^2 HLMC\sqrt{d}\left(t+1\right) + \alpha HL\sqrt{d}\|x_{t+1} - x_t\|\sum_{m=1}^{\infty}\mathbbm{1}_{\{\tau_{t + m} \geq m\}} \\ & +  \alpha HL\sqrt{d}\sum_{k=1}^{t}\|x_{t-k+1} - x_{t-k}\|\sum_{m=k+1}^{\infty}\mathbbm{1}_{\{\tau_{t-k+m} \geq m\}}
\end{align*}
Now we take expectation with respect to the gradient at time $t$ and condition on the past (denoted by $\mathbb{E}_{t|.}$). Using the Lipschitzness of $\tg$ and the supermartingale property of $W_t$ for the first step and the bound on the expected norm of the gradient and Lemma \ref{lem:lemma_on_tau_t} for the second: 
\begin{align*}
\mathbb{E}_{t|.}\left(V_{t+1}\right) & \leq W_t\left(x_t, \ldots, x_0\right) + \alpha HL\|x_t - v_t\| - \alpha^2 HLMC\sqrt{d}t \\ & + \left(\alpha HL\sqrt{d}\mathbb{E}_{t|.}\left(\|\alpha \tg\left(v_t\right)\|\right)\sum_{m=1}^{\infty}\mathbbm{1}_{\{\tau_{t+m} \geq m\}} - \alpha^2 HLMC\sqrt{d}\right) \\ & + \alpha HL\sqrt{d}\sum_{k=1}^{t}\|x_{t-k+1} - x_{t-k}\|\sum_{m=k+1}^{\infty}\mathbbm{1}_{\{\tau_{t-k+m} \geq m\}} \\ & \leq W_t - \alpha^2 HLMC\sqrt{d}t + \alpha HL\sqrt{d}\sum_{k=1}^{t} \|x_{t-k+1} - x_{t-k}\|\sum_{m=k}^{\infty} \mathbbm{1}_{\{\tau_{t-k+m} \geq m\}} \\ & + \alpha HL\left(\|x_t - v_t\| - \sqrt{d}\sum_{k=1}^{t}\|x_{t-k+1} - x_{t-k}\|\mathbbm{1}_{\{\tau_{t}\geq k\}}\right) \leq V_{t}
\end{align*}
This inequality also trivially holds in the case when the algorithm has succeeded at time $t$. Hence, the process $V_t$ is a supermartingale for the lock-free SGD process.

\noindent Note also that if the algorithm has not succeeded at time $T$, then $W_T \geq T$ and hence:
\begin{align*}
V_T & = W_T - \alpha^2 HLMC\sqrt{d}T + \alpha HL\sqrt{d}\sum_{k=1}^{T} \|x_{T-k+1} - x_{T-k}\|\sum_{m=k}^{\infty} \mathbbm{1}_{\{\tau_{T-k+m} \geq m\}} \\ & \geq W_T - \alpha^2 HLMC\sqrt{d}T \geq T\left(1 - \alpha^2 HLMC\sqrt{d}\right) \geq 0
\end{align*}
It follows that $V_t \geq 0$ for all $t$. We also have that $V_0 = W_0$. Now for any $T > 0$:
\begin{align*}
    \mathbb{E}\left(W_0\right) & = \mathbb{E}\left(V_0\right) \geq \mathbb{E}\left(V_T\right) \\ & = \mathbb{E}\left(V_T|F_T\right)\mathbb{P}\left(F_T\right) + \mathbb{E}\left(V_T|\neg F_T\right)\mathbb{P}\left(\neg F_T\right) \\ & \geq \mathbb{E}\left(V_T|F_T\right)\mathbb{P}\left(F_T\right) \\ & = \mathbb{E}\left(W_T - \alpha^2 HLMC\sqrt{d}T + \alpha HL\sqrt{d}\sum_{k=1}^{T} \|x_{T-k+1} - x_{T-k}\|\sum_{m=k}^{\infty} \mathbbm{1}_{\{\tau_{T-k+m} \geq m\}}|F_T\right)\mathbb{P}\left(F_T\right) \\ & \geq \mathbb{E}\left(W_T - \alpha^2 HLMC\sqrt{d}T|F_T\right)\mathbb{P}\left(F_T\right) \\ & = \left(\mathbb{E}\left(W_T|F_T\right) - \alpha^2 HLMC\sqrt{d}T \right)\mathbb{P}\left(F_T\right)  \geq \left(1 - \alpha^2 HLMC\sqrt{d}\right)T\mathbb{P}\left(F_T\right). 
\end{align*}
We conclude that:
\begin{equation}
    \mathbb{P}\left(F_T\right) \leq \frac{\mathbb{E}\left(W_0\right)}{\left(1 - \alpha^2 HLMC\sqrt{d}\right)T}.
\end{equation}
\end{proof}

\subsection{Proof of Corollary~\ref{cor:exact_bound}}

\begin{repcorollary}{cor:exact_bound}
Assume that we run the lock-free SGD algorithm under the Asynchronous Shared Memory model for minimizing a convex function $f$ satisfying the listed assumptions. Set the learning rate to:
\begin{equation}
\alpha = \frac{c\epsilon\vartheta}{M^2 + 2\sqrt{\epsilon}LMC\sqrt{d}} = \frac{c\epsilon\vartheta}{M^2 + 4\sqrt{\epsilon}LM\sqrt{\tau_{\textrm{max}}n}\sqrt{d}},
\end{equation}
for some constant $\vartheta\in (0,1]$. Then for any $T > 0$ the probability that $x_i \not\in S$ for all $i \leq T$ is:
\begin{equation}
\mathbb{P}\left(F_T\right) \leq \frac{M^2 + 4\sqrt{\epsilon}LM\sqrt{\tau_{\textrm{max}}n}\sqrt{d}}{c^2 \epsilon\vartheta T}\textnormal{plog}\left(\frac{e\|x_0 - x^*\|^2}{\epsilon}\right).
\end{equation}
\end{repcorollary}
\begin{proof}
Substituting and using the result from~\cite{de2015taming} that $$\mathbb{E}\left(W_0\left(x_0\right)\right) \leq \frac{\epsilon}{2\alpha c\epsilon - \alpha^2 M^2}\textnormal{plog}\left(\frac{e \|x_0 - x^*\|^2}{\epsilon}\right)$$
\noindent we obtain that:
\begin{align*}
    \mathbb{P}\left(F_T\right) & \leq \frac{\mathbb{E}\left(W_0\right)}{\left(1 - \alpha^2 HLMC\sqrt{d}\right)T} \\ & \leq \frac{\epsilon}{2\alpha c\epsilon - \alpha^2 M^2}\textnormal{plog}\left(\frac{e \|x_0 - x^*\|^2}{\epsilon}\right)\left(\left(1 - \alpha^2 \frac{2\sqrt{\epsilon}}{2\alpha c\epsilon - \alpha^2 M^2} LMC\sqrt{d}\right)T\right)^{-1} \\ & \leq \frac{\epsilon}{\left(2\alpha c\epsilon - \alpha^2 \left(M^2 + 2\sqrt{\epsilon}LMC\sqrt{d}\right)  \right)T}\textnormal{plog}\left(\frac{e\|x_0 - x^*\|^2}{\epsilon}\right)
\end{align*}
Substituting the suggested value for the learning rate:
\begin{align*}
    \mathbb{P}\left(F_T\right) & \leq \frac{\epsilon}{T}\left(2c\epsilon\frac{c\epsilon\vartheta}{M^2 + 2\sqrt{\epsilon}LMC\sqrt{d}} - \left(M^2 + 2\sqrt{\epsilon}LMC\sqrt{d}\right) \left(\frac{c\epsilon\vartheta}{M^2 + 2\sqrt{\epsilon}LMC\sqrt{d}}\right)^2 \right)^{-1}\\ & \quad \textnormal{plog} \left(\frac{e\|x_0 - x^*\|^2}{\epsilon}\right) \\ & = \frac{\epsilon}{\left(\frac{2c^2 \epsilon^2\vartheta}{M^2 + 2\sqrt{\epsilon}LMC\sqrt{d}} - \frac{c^2 \epsilon^2\vartheta^2}{M^2 + 2\sqrt{\epsilon}LMC\sqrt{d}}\right)T} \textnormal{plog}\left(\frac{e\|x_0 - x^*\|^2}{\epsilon}\right)\\ &  \leq \frac{M^2 + 2\sqrt{\epsilon}LMC\sqrt{d}}{c^2 \epsilon\vartheta T}\textnormal{plog}\left(\frac{e\|x_0 - x^*\|^2}{\epsilon}\right) \\ & = \frac{M^2 + 4\sqrt{\epsilon}LM\sqrt{\tau_{\textrm{max}}n}\sqrt{d}}{c^2 \epsilon\vartheta T}\textnormal{plog}\left(\frac{e\|x_0 - x^*\|^2}{\epsilon}\right).
\end{align*}
\end{proof}

\subsection{Analysis of Algorithm~\ref{algo:sgd_iterate}} 

\begin{repcorollary}{cor:full}
    The \lit{FullSGD} procedure given in Algorithm~\ref{algo:sgd_iterate} guarantees that $\E [ \| r -  x^\star \| ] \leq \epsilon$ after executing for $O( T \log \left( \alpha {  2M n / \sqrt \epsilon } \right) )$ iterations. 
\end{repcorollary}
\begin{proof}[Proof Sketch]
The main idea behind the analysis is as follows. By Theorem \ref{thm:bound_on_fail_prob}, we know that, with high probability, there exists a time $t$ in each epoch where the aggregated gradients $x_t$ enter the success region, i.e. $\| x_t - x^* \|^2 \leq \epsilon$, for a fixed parameter $\epsilon$. 
The algorithm will guarantee that the model will not leave the success region before the end of the last epoch. This is ensured via our choice of the learning rate. 

Let us now focus on the last epoch. Fix an $\epsilon > 0$; we wish to prove that,  
$\| x_T - x^\star \|^2 \leq \epsilon$ at the end of this epoch, in expectation.
We fix the success condition of \lit{EpochSGD} such that that there exists an iteration $t$ in the epoch such that $\| x_t - x^\star\| \leq \sqrt \epsilon / 2$. 
We note that the adversary may attempt to schedule ``stale'' updates, generated earlier in the execution, to cause the algorithm to leave the success region. However, we notice that there can be at most $n - 1$ gradients generated \emph{before} time $t$, which have not been applied yet. 
Denote these gradients by $(G( v_{\theta_i} ))_{i = 1, \ldots, n - 1}$. 

Finally, we claim that, in expectation, the distance between the final model $x_T$ and the optimum is upper bounded by 
\begin{align*} \| x_T - x^* \| \leq \| x_t + \alpha \sum_{i = 1}^{n - 1} G( v_{\theta_i} ) - x^\star \| & \leq  \| x_T - x^* \| + \alpha n M \leq \sqrt \epsilon / 2 + \sqrt \epsilon / 2 = \sqrt \epsilon .
\end{align*}
\end{proof}

\end{document}